\documentclass{ieeeaccessV2}
\usepackage{xcolor}
\AtBeginDocument{%
  \colorlet{accessblue}{black}%
  \colorlet{greycolor}{black}%
  \colorlet{grey}{black}%
}
\usepackage{algpseudocode}
\usepackage{cite}
\usepackage{amsmath,amssymb,amsfonts, amsthm}
\usepackage{algorithm}
\usepackage{hyperref}
\usepackage{graphicx}
\usepackage{textcomp}
\def\BibTeX{{\rm B\kern-.05em{\sc i\kern-.025em b}\kern-.08em
    T\kern-.1667em\lower.7ex\hbox{E}\kern-.125emX}}
\begin{document}
\doi{}

\makeatletter
\def\headerlogo{}
\def\headerlogoall{}
\makeatother

\makeatletter

\def\ps@headings{%
  \def\@oddfoot{\raisebox{8pt}{%
    \hbox to 0pc{\hbox to \textwidth{\mbox{}\hfill{\footerpagefont\thepage}}}%
  }}%
  \def\@evenfoot{\raisebox{8pt}{%
    \hbox to 0pc{\hbox to \textwidth{{\footerpagefont\thepage}\hfill}}%
  }}%
}
\makeatother

\makeatletter
\def\@oddfoot{}
\def\@evenfoot{}
\makeatother

\makeatletter

\def\@oddfoot{%
  \raisebox{8pt}{%
    \hbox to \textwidth{\hfill{\footerpagefont\thepage}}%
  }%
}

\def\@evenfoot{%
  \raisebox{8pt}{%
    \hbox to \textwidth{{\footerpagefont\thepage}\hfill}%
  }%
}

\makeatother

\newtheorem{theorem}{Theorem}%
\newtheorem{lemma}{Lemma}%
\newtheorem{prop}{Proposition}%
\newtheorem{definition}{Definition}
\newtheorem{remark}{Remark}
\newtheorem{problem}{Problem}
\newtheorem{cor}{Corollary}
\newtheorem{assumption}{Assumption}

\title{A Trainable-Embedding Quantum Physics-Informed Framework for Multi-Species Reaction–Diffusion Systems
}
\author{\uppercase{Ban Q. Tran}\authorrefmark{1,4}\IEEEmembership{Student Member, IEEE}, Nahid Binandeh Dehaghani\authorrefmark{2}
\IEEEmembership{Member, IEEE}, A. Pedro Aguiar \authorrefmark{3} \IEEEmembership{Senior Member, IEEE}, Rafal Wisniewski\authorrefmark{2} \IEEEmembership{Fellow, IEEE}, Susan Mengel \IEEEmembership{Member, IEEE} \authorrefmark{1}}
\address[1]{Department of Computer Science, Texas Tech University, Lubbock, USA}
\address[2]{Department of Electronic Systems, Aalborg University, Aalborg, Denmark}
\address[3]{SYSTEC-ARISE, Faculty of Engineering, University of Porto, Porto, Portugal
\address[4]{Department of Computing Fundamentals, FPT University, Hanoi, Vietnam}}
\tfootnote{The authors acknowledge the support of the Danish e-Infrastructure Consortium (DeiC) and the National Quantum Algorithm Academy (NQAA) through the Postdoctoral Scholarship 
 under the project ``Quantum-Driven Solutions for Multi-Agent Systems and Advanced Computation''. This work was also partially supported by UID/00147- Research Center for Systems and Technologies (SYSTEC) - and the Associate Laboratory Advanced Production and Intelligent Systems (ARISE, 10.54499/LA/P/0112/2020) funded by Fundação para a Ciência e a Tecnologia, I.P./ MCTES through the national funds. Furthermore, this research was supported, in part, by the Advanced Computing and Data Resource program, supported by the U.S. National Science Foundation (ACCESS), and Texas Tech University. The authors also acknowledge the research support provided by FPT University in Vietnam.}

\markboth
{Tran-Dehaghani \headeretal: Extended Trainable-Embedding Quantum PINNs for Multi-Species Reaction–Diffusion Dynamics}
{Tran-Dehaghani \headeretal: Extended Trainable-Embedding Quantum PINNs for Multi-Species Reaction–Diffusion Dynamics}

\corresp{Corresponding authors: Ban Q. Tran (email: bantran@ttu.edu-bantq3@fe.edu.vn), Nahid Binandeh Dehaghani (email: nahidbd@es.aau.dk)}

\begin{abstract}
Physics-informed neural networks (PINNs) and hybrid quantum-classical extensions provide a promising framework for solving partial differential equations (PDEs) by embedding physical laws directly into the learning process. In this work, we study embedding strategies for trainable embedding quantum physics-informed neural networks (TE-QPINNs) in the context of nonlinear reaction-diffusion (RD) systems. We introduce an extended TE-QPINN (x-TE-QPINN) architecture that supports both classical and fully quantum embeddings, enabling a controlled comparison between feedforward neural network-based feature maps and parameterized quantum circuit embeddings. The first architecture is the classical embedding feed-forward neural network-based TE-QPINN (FNN-TE-QPINN), while the latter variant is a purely quantum one, referred to as quantum embedding neural network-based TE-QPINN (QNN-TE-QPINN). The proposed framework employs hardware-efficient variational quantum circuits and species-specific readout operators to approximate coupled multi-field dynamics while enforcing governing equations, boundary conditions, and initial conditions through a physics-informed loss function. By isolating the embedding mechanism while keeping the variational ansatz, loss formulation, and optimization procedure fixed, we analyze the impact of embedding design on gradient structure, parameter scaling, and quantum resource requirements. Numerical experiments on one- and two-dimensional RD equations demonstrate that quantum embeddings can replace classical embeddings without degradation in solution accuracy and, in certain regimes, exhibit improved optimization behavior compared to classical PINNs and hybrid quantum models with fixed embeddings. These results provide architectural insight into hybrid quantum PDE solvers and inform the design of resource-efficient quantum physics-informed learning methods for near-term quantum engineering applications.
\end{abstract} 

\begin{keywords}
Hybrid quantum–classical computing, Physics-informed neural networks, Quantum machine learning, Quantum neural networks, Trainable embeddings, Reaction–diffusion systems
\end{keywords}

\titlepgskip=-15pt

\maketitle

\section{Introduction}
\label{sec:introduction}

R{eaction}-diffusion systems constitute a fundamental class of PDEs for modeling spatiotemporal phenomena in biomedical science, including drug transport, intracellular signaling, morphogenesis, and tumor growth~\cite{murray2007mathematical,siepmann2012modeling,kondo2010reaction}. By coupling nonlinear reaction kinetics with spatial diffusion, RD models capture the interplay between local biochemical interactions and global transport effects, enabling predictive simulation of complex, multiscale biological dynamics. Owing to their ability to represent both spatial heterogeneity and temporal evolution, RD systems are widely used in analysis, simulation, and control-oriented modeling.

Despite their modeling power, the numerical solution of coupled nonlinear RD systems remains computationally challenging. Strong nonlinearities, stiffness, and multi-species coupling typically require fine spatiotemporal discretizations and iterative solvers, leading to high computational costs and limited scalability. While classical numerical methods can achieve high accuracy, their computational burden becomes prohibitive for large-scale simulations, parametric studies, or real-time applications. PINNs~\cite{raissi2019physics} have been proposed as a data-efficient alternative by embedding governing equations, boundary conditions, and initial conditions directly into the learning objective. However, classical PINNs often exhibit slow convergence and optimization difficulties when applied to nonlinear and coupled PDEs, particularly in stiff or multi-field settings~\cite{cuomo2022scientific,toscano2025pinns}.

Recent advances in quantum computing have motivated the exploration of hybrid quantum-classical algorithms for scientific computing. In particular, variational quantum algorithms based on parameterized quantum circuits (PQCs) have shown promise as expressive function approximators within near-term quantum hardware constraints. Building on this idea, quantum physics-informed nueral networks (QPINNs)~\cite{kyriienko2021solving} and Trainable Embedding QPINNs (TE-QPINNs)~\cite{berger2025trainable} integrate PQCs into physics-informed learning frameworks, enabling the approximation of PDE solutions in high-dimensional Hilbert spaces. A central architectural component in these approaches is the embedding mechanism that maps classical spatiotemporal coordinates into quantum states, which strongly influences expressivity, trainability, and quantum resource requirements.

In existing TE-QPINN formulations, the embedding stage is typically realized using a classical feedforward neural network (FNN) that generates trainable feature maps for angle encoding in a quantum circuit. While this hybrid design offers flexibility and reduces quantum resource demands, it raises a fundamental question from a quantum engineering perspective: to what extent can the embedding stage itself be made quantum-native without compromising performance or scalability? Addressing this question is essential for understanding architectural trade-offs in hybrid quantum PDE solvers and for guiding their deployment on near-term quantum platforms.

In this work, we investigate embedding strategies for TE-QPINNs in the context of coupled nonlinear reaction–diffusion systems. We consider identical discretization, loss formulation, and optimization settings across classical PINNs, FNN-TE-QPINNs with classical feedforward neural network (FNN) embeddings, and a fully quantum embedding variant referred to as QNN-TE-QPINN. In QNN-TE-QPINN, the classical embedding network is replaced by a PQC, yielding a fully quantum trainable feature map while preserving the variational ansatz and physics-informed loss.

\textbf{Contributions.} The main contributions of this work are summarized as follows:
\begin{itemize}
    \item We study embedding strategies for trainable embedding QPINNs applied to coupled nonlinear PDEs, with a focus on RD systems.
    \item We introduce \emph{QNN-TE-QPINN}, a TE-QPINN architecture employing a fully quantum neural network (QNN) as the embedding function, replacing the classical feedforward embedding used in prior work.
    \item We develop an end-to-end differentiable training pipeline that combines parameter-shift rules for quantum gradients with automatic differentiation for PDE residuals, ensuring physics consistency and scalability.
    \item We provide a controlled comparison between classical PINNs, FNN-TE-QPINNs with classical embeddings, and QNN-TE-QPINNs, analyzing convergence behavior, accuracy, parameter scaling, and quantum resource requirements.
\end{itemize}

Overall, this work contributes to the quantum engineering of hybrid PDE solvers by clarifying the role of embedding design in quantum physics-informed learning. Rather than aiming to demonstrate quantum advantage, the focus is on architectural insight and resource-aware design choices relevant for near-term quantum computing platforms.

\Figure[t](topskip=0pt, botskip=0pt, midskip=0pt)[width=2.0\columnwidth]{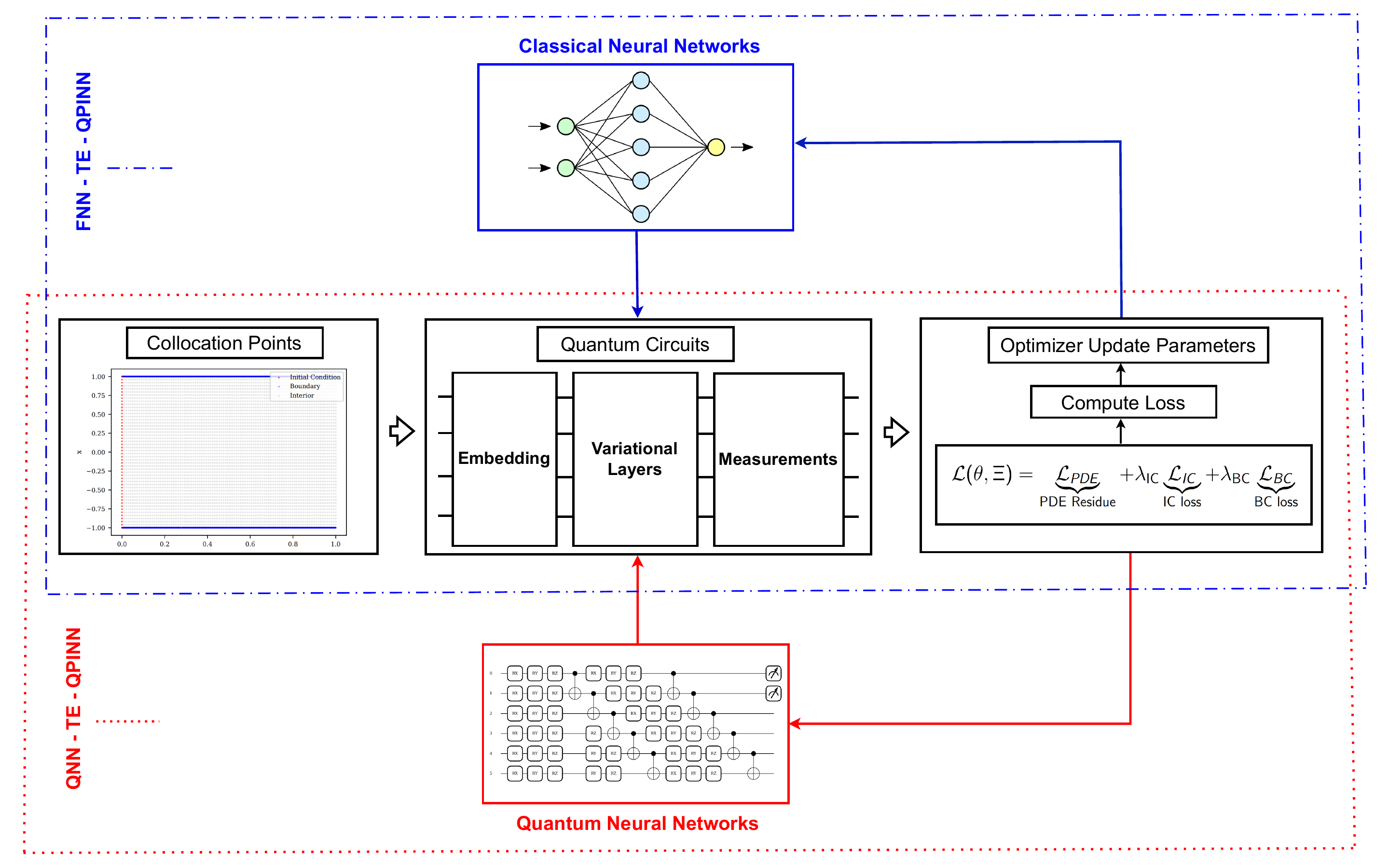}
{The overall architectural diagrams of the x-TE-QPINN, including the classical FNN-TE-QPINN and the purely quantum version QNN-TE-QPINN, where the embedding circuit is entirely constructed using a quantum circuit.
\label{xTE-QPINN-Arch}}

\section{Reaction-Diffusion Model Formulation}
\label{sec:rd-model}

The spatiotemporal evolution of interacting molecular species in an RD system is governed by coupled PDEs that combine passive diffusion with nonlinear reaction kinetics. As a representative benchmark, we consider a two-species RD model consisting of an autocatalytic activator \(A\) and a substrate \(S\), described by
\begin{align}
\frac{\partial c_A}{\partial t} &= D_A \nabla^2 c_A + \kappa_1 c_A^2 c_S - \kappa_2 c_A, \label{eq:dynamics_A}\\
\frac{\partial c_S}{\partial t} &= D_S \nabla^2 c_S - \kappa_1 c_A^2 c_S + \kappa_3, \label{eq:dynamics_S}
\end{align}
where \(c_A(x,t)\) and \(c_S(x,t)\) denote the concentrations of species \(A\) and \(S\) at spatial coordinate \(x\) and time \(t\), respectively. The parameters \(D_A\) and \(D_S\) represent diffusion coefficients, while \(\kappa_1\), \(\kappa_2\), and \(\kappa_3\) characterize autocatalytic amplification, linear decay of the activator, and external substrate replenishment~\cite{elf2003mesoscopic,murray2007mathematical}.

Equation~\eqref{eq:dynamics_A} governs the activator dynamics, which are characterized by slow diffusion and nonlinear amplification through interaction with the substrate. Equation~\eqref{eq:dynamics_S} describes the substrate dynamics, featuring faster diffusion, consumption through autocatalysis, and constant replenishment. The resulting diffusivity asymmetry (\(D_S \gg D_A\)) is a well-known mechanism underlying diffusion-driven instabilities and the emergence of spatially heterogeneous solutions, such as Turing patterns~\cite{turing1990chemical,cross1993pattern}.

The coupled system~\eqref{eq:dynamics_A}--\eqref{eq:dynamics_S} captures essential structural features common to nonlinear RD models: multiscale transport, nonlinear coupling between species, and competing reaction and diffusion timescales. The qualitative behavior of the system—including convergence to homogeneous steady states, temporal oscillations, or self-organized spatial structures—depends on the interplay between reaction rates and diffusion coefficients, often characterized by nondimensional quantities such as the Damköhler number~\cite{epstein1996nonlinear,kapral2012chemical}.

From a computational perspective, these properties make the model a challenging benchmark for physics-informed learning. Strong nonlinearities, coupled fields, and stiffness place significant demands on optimization and expressivity, particularly in data-scarce regimes. For these reasons, the RD system in~\eqref{eq:dynamics_A}--\eqref{eq:dynamics_S} provides a suitable testbed for evaluating hybrid quantum-classical physics-informed architectures.

This classical RD formulation establishes the physical foundation for the quantum-assisted learning framework introduced in the following sections, where the coupled dynamics are embedded into a physics-informed hybrid quantum-classical model for data-efficient approximation of the solution fields \((c_A,c_S)\).

\section{Quantum-Assisted Learning Formulation}
\label{sec:quantum_formulation}
We now introduce a hybrid quantum--classical physics-informed learning framework for approximating the solutions of coupled nonlinear RD systems. The proposed approach integrates collocation-based enforcement of governing equations, boundary conditions, and initial conditions with trainable variational quantum circuits, enabling a unified treatment of classical and quantum components within a single optimization pipeline. An overview of the resulting x-TE-QPINN architecture is shown in Fig.~\ref{xTE-QPINN-Arch}, which illustrates the interaction between classical preprocessing and residual evaluation, trainable embedding mechanisms, variational quantum circuit execution, species-specific readout, and parameter updates during training.

\subsection{Physics-Informed Problem Formulation}
Consider a coupled nonlinear RD system defined over a spatial domain $\Omega \subset \mathbb{R}^d$ and a temporal interval $t \in [0,T]$. The governing equations for the activator--substrate dynamics can be written in residual form as
\begin{equation*}
\begin{aligned}
\mathcal{D}_A[c_A, c_S; \boldsymbol{\beta}](x,t) - q_A(x,t) &= 0, 
&& (x,t) \in \Omega \times (0,T], \\
\mathcal{D}_S[c_A, c_S; \boldsymbol{\beta}](x,t) - q_S(x,t) &= 0, 
&& (x,t) \in \Omega \times (0,T], \\
\mathcal{B}[c_A, c_S](x,t) - \mathbf{b}(x,t) &= 0, 
&& (x,t) \in \partial\Omega \times (0,T], \\
\mathbf{c}(x,0) - \mathbf{c}_0(x) &= 0, 
&& x \in \Omega,
\end{aligned}
\end{equation*}
where $\mathbf{c}(x,t) = [\,c_A(x,t),\, c_S(x,t)\,]^{\mathsf{T}}$ denotes the
concentration vector of the two species and
$\boldsymbol{\beta} = [\,D_A, D_S, \kappa_1, \kappa_2, \kappa_3\,]^{\mathsf{T}}$
collects the physical parameters. The nonlinear operators
$\mathcal{D}_A[\cdot]$ and $\mathcal{D}_S[\cdot]$ encode the diffusive and
reactive dynamics defined in
Eqs.~\eqref{eq:dynamics_A}--\eqref{eq:dynamics_S}, while $q_i(x,t)$, $i\in\{A,S\}$, 
$\mathbf{b}(x,t)$, and $\mathbf{c}_0(x)$ specify source terms, boundary
conditions, and initial conditions, respectively.

\smallskip
The objective is to construct approximations
$\tilde{c}_A(x,t;\Theta)$ and $\tilde{c}_S(x,t;\Theta)$ that 
approximate 
the solution of 
the coupled RD system throughout the spatiotemporal domain. To this end, we introduce a hybrid quantum-classical physics-informed learning model parameterized by the composite set
$\Theta = (\theta_{\mathrm{var}}, \theta_{\mathrm{emb}})$,
where $\theta_{\mathrm{var}}$ denotes the variational parameters of the quantum circuit and $\theta_{\mathrm{emb}}$ parameterizes the embedding function. Depending on the embedding approachs, the embedding parameters are defined as
\[
\theta_{\mathrm{emb}} =
\begin{cases}
\theta_F, & \text{classical embedding (FNN-based)}, \\
\theta_Q, & \text{quantum embedding (QNN-based)}.
\end{cases}
\]

\smallskip
A trainable embedding function maps the spatiotemporal coordinates $(x,t)$ to a latent representation $\Gamma(x,t)$ suitable for quantum processing. In the classical embedding configuration, this mapping is performed by a feedforward neural network $\Gamma_{\theta_F}(x,t)$. In the quantum embedding configuration, the embedding is produced by a PQC with parameters $\theta_Q$. In both cases, the embedded features determine a data encoding operator $U_{\mathrm{enc}}(x,t;\theta_{\mathrm{emb}})$ by specifying input-dependent rotation angles applied prior to the variational evolution.

\smallskip
We define the composite unitary
$U(x,t)
\;\equiv\;
U_{\mathrm{var}}(\theta_{\mathrm{var}})\,
U_{\mathrm{enc}}(x,t;\theta_{\mathrm{emb}})$,
where $U_{\mathrm{var}}(\theta_{\mathrm{var}})$ denotes a data-independent variational quantum circuit and $U_{\mathrm{enc}}(x,t;\theta_{\mathrm{emb}})$ is a data-encoding operator whose rotation angles depend on the spatiotemporal coordinates $(x,t)$.
Predictions for the activator and substrate concentration fields are obtained from expectation values of species-specific observables. For
$i \in \{A,S\}$, the model output is given by
\begin{equation}
\tilde{c}_i(x,t)
=
\langle 0|^{\otimes N_q}
U^{\dagger}(x,t)\, 
\mathcal{O}_i\,
U(x,t) 
|0\rangle^{\otimes N_q},
\end{equation}
where $\mathcal{O}_i$ are Hermitian observables associated with the corresponding concentration fields and $N_q$ denotes the number of qubits. The spatiotemporal coordinates $(x,t)$ enter the quantum model exclusively through the data-encoding
operator $U_{\mathrm{enc}}(x,t;\theta_{\mathrm{emb}})$ as input-dependent rotation angles, while all trainable parameters are contained in $\Theta$.

\smallskip
The proposed architecture constitutes a physics-guided function approximator that learns the mapping
$(x,t) \longmapsto [\,\tilde{c}_A(x,t),\; \tilde{c}_S(x,t)\,]$
directly from the governing physical laws, without requiring labeled solution data. The coupled nature of the RD system is enforced through a unified physics-informed loss function that penalizes the PDE residuals, boundary conditions, and initial conditions for both species.

\smallskip
This formulation builds on recent advances in quantum-assisted scientific machine learning, where differentiable quantum circuits combined with trainable embeddings enhance expressivity and optimization stability
\cite{kyriienko2021solving,cerezo2021variational,bharti2022noisy}. Gradients of the hybrid model are computed using differentiable quantum nodes, combining parameter-shift rules for the quantum components with automatic differentiation
for the classical embedding networks \cite{dehaghani2025quantum}. This hybrid gradient strategy enables scalable training while maintaining consistency with the underlying physical constraints.

\subsection{Trainable Embedding Network}
\label{subsec:trainable_embedding}

To encode the spatiotemporal coordinates $(x,t)$ into quantum states suitable for variational processing, we employ a trainable embedding network that maps classical inputs to a latent representation controlling the quantum encoding
stage. This embedding determines the rotation angles applied to individual qubits prior to the variational evolution and constitutes the sole mechanism through which the input coordinates are incorporated into the quantum model.

\smallskip
Prior to embedding, the spatial and temporal coordinates are normalized to a compact interval,
\[
\tilde{x} = \mathcal{N}(x; x_{\min}, x_{\max}), \qquad
\tilde{t} = \mathcal{N}(t; t_{\min}, t_{\max}),
\]
where $\mathcal{N}(\cdot)$ denotes an affine normalization mapping
from the original coordinate range $[x_{\min}, x_{\max}]$
(and $[t_{\min}, t_{\max}]$, respectively) to the interval $[-1,1]$.
This normalization mitigates the scale imbalance between spatial and temporal dimensions, improving numerical stability during training.
The embedding function produces a feature map
\[
\Gamma(x,t)
=
\big[\,\alpha_1(\tilde{x},\tilde{t}),\;\ldots,\;
\alpha_{N_q}(\tilde{x},\tilde{t})\,\big]^{T},
\]
where each component $\alpha_i(\tilde{x},\tilde{t})$ specifies the rotation angle applied to the $i$-th qubit. The mapping $\Gamma(x,t)$ can be realized using different embedding strategies within a unified framework:
\begin{itemize}
    \item Quantum embedding (QNN-based): A PQC with variational parameters $\theta_Q$ processes the normalized coordinates $(\tilde{x},\tilde{t})$ and produces
    expectation values, which are subsequently transformed into
    rotation angles $\alpha_i(\tilde{x},\tilde{t};\theta_Q)$
    used in the data-encoding stage.
    \item Classical embedding (FNN-based): A lightweight feedforward neural network parameterized by $\theta_F$ 
    maps the normalized inputs
    $(\tilde{x},\tilde{t})$ directly to rotation angles
$\alpha_i(\tilde{x},\tilde{t};\theta_F)$, yielding a fully trainable classical feature map.
\end{itemize}
This modular design enables a controlled comparison between classical and fully quantum embeddings while preserving a common downstream variational architecture.

\smallskip
The embedding angles define the quantum encoding operator
$U_{\mathrm{enc}}(x,t;\theta_{\mathrm{emb}})
=
\bigotimes_{i=1}^{N_q}
R_y\!\big(\alpha_i(\tilde{x},\tilde{t}, \theta_{\mathrm{emb}})\big)$,
where $R_y(\cdot)=\exp(-i\,(\cdot)\,\sigma_y/2)$ denotes a single-qubit
rotation about the $y$-axis acting on the $i$-th qubit.
This operator
prepares the embedded quantum state
$|\psi_{\mathrm{enc}}(x,t)\rangle
=
U_{\mathrm{enc}}(x,t;\theta_{\mathrm{emb}})\,
|0\rangle^{\otimes N_q}$.
Depending on the selected embedding strategy, the angles $\alpha_i$ are produced by a classical neural network, a quantum embedding circuit, or an analytic mapping.
The embedding parameters $\theta_{\mathrm{emb}}$—instantiated as either $\theta_F$ or $\theta_Q$—are optimized jointly with the variational parameters $\theta_{\mathrm{var}}$ of the shared quantum circuit. This joint optimization allows the embedding to adapt to the spatiotemporal structure of the RD dynamics, while the repeated application of a fixed variational circuit enables a consistent functional representation over the entire domain. Importantly, this expressivity is achieved using shallow, hardware-efficient quantum circuits compatible with near-term quantum devices.

\subsection{Variational Quantum Circuit Design}
\label{subsec:vqc_design}

After the data-encoding stage, the prepared quantum state is processed by a parameterized variational quantum circuit (VQC), which constitutes the trainable quantum component of the proposed hybrid architecture. The role of the VQC is to transform the encoded quantum state into a representation from which the coupled concentration fields can be inferred through measurement.

\smallskip
The variational circuit is parameterized by a set of trainable variables $\theta_{\mathrm{var}}$ and is shared across all spatiotemporal inputs. That is, the same variational unitary is repeatedly applied to quantum states prepared with different data-dependent encodings. This shared-parameter design reduces the number of trainable quantum degrees of freedom and enforces a common latent quantum representation for the coupled RD dynamics. Multiple output fields are obtained by measuring distinct observables on the resulting quantum state.

\smallskip
The variational unitary is constructed as a sequence of $L$ layers,
$U_{\mathrm{var}}(\theta_{\mathrm{var}})
=
U_L(\theta_L)\cdots U_2(\theta_2)\,U_1(\theta_1)$,
where each layer $U_\ell(\theta_\ell)$ is expressed as a product of parameterized and unparameterized unitary operations,
\begin{equation}
U_\ell(\theta_\ell)
=
\prod_{m}
\exp\!\big(-i\,\theta_{\ell,m}\,H_m\big)\,W_m.
\label{eq:var_layer_abstract}
\end{equation}
Here, $H_m$ are fixed Hermitian generators and $W_m$ denote unparameterized unitary operators.

\smallskip
Each layer therefore consists of parameterized single-qubit operations followed by fixed entangling gates. In the present implementation, the generators $H_m$ are chosen from the Pauli basis and correspond to rotations about the $x$-, $y$-, and $z$-axes of the Bloch sphere applied independently to each qubit. The unparameterized unitaries $W_m$ implement a fixed nearest-neighbor entangling pattern realized as a chain of CNOT gates. This layered construction allows the expressivity of the variational circuit to be systematically increased by adding layers, while maintaining a shallow circuit depth and a
gate structure compatible with near-term quantum hardware.

\smallskip
For a given input $(x,t)$, the variational quantum state is obtained by applying the shared variational circuit to the encoded state prepared by the data encoding operator,
\begin{equation}
|\psi_{\mathrm{var}}(x,t;\Theta)\rangle
=
U_{\mathrm{var}}(\theta_{\mathrm{var}})
U_{\mathrm{enc}}(x,t;\theta_{\mathrm{emb}})
|0\rangle^{\otimes N_q}.
\end{equation}
Predictions for the activator and substrate concentration fields are then computed as expectation values of species-specific observables.

\smallskip
The variational parameters $\theta_{\mathrm{var}}$ are optimized jointly with the embedding parameters $\theta_{\mathrm{emb}}$ using the physics-informed loss introduced in the next subsection. By repeatedly applying a shared variational circuit to differently encoded quantum states, the model captures nonlinear coupling, spatial gradients, and temporal stiffness characteristic of RD systems, while preserving a hardware-efficient circuit structure suitable for near-term quantum implementations.

\subsection{Quantum Readout and Observable Design}
\label{subsec:quantum_readout}

The hybrid quantum model produces estimates of the activator and substrate concentration fields by evaluating expectation values of appropriately chosen Hermitian observables measured on the variational quantum state. Distinct physical fields are separated at the readout stage, while sharing a common latent quantum state generated by the embedding and variational circuits.

\smallskip
For each species $i \in \{A,S\}$, we associate a readout operator $\mathcal{O}_i$ acting on the $N_q$-qubit Hilbert space. In this work, we employ Pauli-$Z$–based observables of the form
$\mathcal{O}_i
=
\sum_{j \in \mathcal{Q}_i}
\sigma_z^{(j)}$,
where $\sigma_z^{(j)}$ denotes the Pauli-$Z$ operator acting on the $j$-th qubit and $\mathcal{Q}_i \subseteq \{1,\ldots,N_q\}$ specifies the subset of qubits associated with species $i$. Depending on the chosen readout configuration,
$\mathcal{Q}_i$ may contain a single qubit or multiple qubits, enabling flexible allocation of quantum resources across output fields.

\smallskip
In this work, we employ Pauli-$Z$--based observables, which
are efficiently measurable on near-term quantum hardware and yield bounded expectation values. This property is particularly advantageous in physics-informed learning, as bounded readouts improve the numerical stability of the PDE residuals and their gradients, thereby mitigating instabilities during training.
The predicted concentration fields are obtained as
\[
\tilde{c}_i(x,t)
=
\big\langle
\psi_{\mathrm{var}}(x,t)
\big|
\mathcal{O}_i
\big|
\psi_{\mathrm{var}}(x,t)
\big\rangle,
\qquad i \in \{A,S\},
\]
where $|\psi_{\mathrm{var}}(x,t)\rangle$ denotes the variational quantum state defined in Sec.~\ref{subsec:vqc_design}.

Although Pauli-$Z$–based observables provide a robust and low-overhead readout strategy, the proposed framework readily supports alternative Hermitian observables, such as weighted Pauli strings or problem-specific operators, provided they remain efficiently measurable on near-term devices. The choice of observable therefore reflects a trade-off between expressiveness, numerical stability, and practical implementability in quantum-assisted solutions of nonlinear RD systems.

\subsection{Physics-Informed Loss Function}
\label{subsec:physics_loss}

Training of the hybrid quantum-classical model is performed by minimizing a physics-informed loss function that enforces the governing RD equations together with the associated boundary and initial conditions. Let $\mathcal{S}_{\mathrm{PDE}} \subset \Omega \times (0,T]$ denote a set of interior collocation points, and let
$\mathcal{S}_{\mathrm{BC},k} \subset \partial\Omega_k \times (0,T]$ denote the collocation points associated with the $k$-th boundary segment. The total loss is defined as
\begin{equation}
\mathcal{L}(\Psi)
=
\mathcal{L}_{\mathrm{PDE}}(\Psi)
+
\sum_{k} \lambda_k\,\mathcal{L}_{\mathrm{BC},k}(\Psi)
+
\lambda_{\mathrm{IC}}\,\mathcal{L}_{\mathrm{IC}}(\Psi),
\label{eq:hybrid_loss}
\end{equation}
where $\Psi$ collects all trainable parameters of the embedding network and the variational quantum circuit, and $\lambda_k, \lambda_{\mathrm{IC}} \ge 0$ are weighting coefficients that balance the contributions of the different loss terms.

\smallskip
The interior loss penalizes violations of the coupled RD equations at the interior collocation points,
\begin{align}
&\mathcal{L}_{\mathrm{PDE}}(\Psi)
=
\sum_{(x^j,t^j)\in\mathcal{S}_{\mathrm{PDE}}}
\Big|
\mathcal{D}_A[\tilde{c}_A,\tilde{c}_S;\boldsymbol{\beta}](x^j,t^j)
-
q_A(x^j,t^j)
\Big|^{2}
\nonumber\\
&+
\sum_{(x^j,t^j)\in\mathcal{S}_{\mathrm{PDE}}}
\Big|
\mathcal{D}_S[\tilde{c}_A,\tilde{c}_S;\boldsymbol{\beta}](x^j,t^j)
-
q_S(x^j,t^j)
\Big|^{2}.
\end{align}
All spatial and temporal derivatives appearing in the differential operators $\mathcal{D}_A$ and $\mathcal{D}_S$ are evaluated using automatic differentiation through the hybrid quantum--classical computational graph. This includes differentiation of expectation values with respect to the input coordinates $(x,t)$ as well as differentiation with respect to the trainable parameters.

For each boundary segment $\partial\Omega_k$, the boundary loss enforces the corresponding boundary operator $\mathcal{B}_k[\cdot]$,
\begin{equation*}
\mathcal{L}_{\mathrm{BC},k}(\Psi)
=
\sum_{(x^j,t^j)\in\mathcal{S}_{\mathrm{BC},k}}
\Big|
\mathcal{B}_k[\tilde{c}_A,\tilde{c}_S](x^j,t^j)
-
b_k(x^j,t^j)
\Big|^{2}.
\end{equation*}

The initial condition at $t=0$ is enforced via
\begin{equation}
\mathcal{L}_{\mathrm{IC}}(\Psi)
=
\sum_{x^j\in\Omega}
\Big\|
\tilde{\mathbf{c}}(x^j,0;\Psi)
-
\mathbf{c}_0(x^j)
\Big\|^{2},
\qquad
\tilde{\mathbf{c}} = [\,\tilde{c}_A,\tilde{c}_S\,]^{\mathsf{T}}.
\end{equation}

\smallskip
The weighting coefficients $\lambda_k$ and $\lambda_{\mathrm{IC}}$ regulate the relative contributions of the governing equations, boundary constraints, and initial condition. Proper balancing of these terms is particularly important for stiff and nonlinear RD systems, where disparities in the magnitudes of residuals can impede convergence. In practice, appropriate tuning of the loss weights improves numerical stability and facilitates the accurate recovery of multiscale dynamics and sharp spatial gradients characteristic of RD processes.

\subsection{Gradient Structure and Derivative Computation}
\label{subsec:gradients}

Training the proposed hybrid quantum-classical model requires evaluating derivatives of the predicted concentrations with respect to both the variational quantum parameters and the embedding parameters. These derivatives enter the physics-informed loss through the differential operators of the RD system, which involve spatial and temporal derivatives of
the learned fields.

\smallskip
The predicted concentrations $\tilde{c}_i(x,t)$ are defined as expectation values of observables measured on a variational quantum state. Their dependence on the spatiotemporal coordinates $(x,t)$ arises exclusively through the embedding stage, which maps $(x,t)$ to a set of rotation angles $\alpha_j(x,t;\theta_{emb})$ applied to the quantum circuit. Consequently, differentiation with respect to spatial or temporal coordinates propagates through the embedding before affecting the quantum circuit output.

Applying the chain rule, the derivative of the model output with respect to an input coordinate (e.g.\ $x$) decomposes into a classical and a quantum contribution. The classical component captures the sensitivity of the embedding angles to changes in the input coordinate, while the quantum component quantifies the response of the circuit expectation value to infinitesimal
variations in those angles. Formally, one obtains
\begin{equation}
\frac{\partial \tilde{c}_i(x,t)}{\partial x}
=
\sum_{j=1}^{N_q}
\frac{\partial \tilde{c}_i}{\partial \alpha_j}
\,
\frac{\partial \alpha_j(x,t;\chi)}{\partial x},
\qquad i \in \{A,S\},
\label{eq:input_derivative}
\end{equation}
where the partial derivatives $\partial \tilde{c}_i / \partial \alpha_j$ are evaluated using the parameter-shift rule applied to the corresponding quantum rotations, and the derivatives $\partial \alpha_j / \partial x$ are computed via backpropagation through the embedding network \cite{bergholm2018pennylane,schuld2019evaluating}. An analogous expression holds for differentiation with respect to the temporal coordinate $t$.

\smallskip
Gradients of the total loss with respect to any trainable parameter $\theta \in \{\theta_{var},\theta_{emb}\}$ follow from repeated application of the chain rule. Specifically,
\begin{align}
\frac{\partial \mathcal{L}}{\partial \theta}
&=
\sum_{(x^j,t^j)\in\mathcal{S}_{\mathrm{PDE}}}
\sum_{i\in\{A,S\}}
\frac{\partial \mathcal{L}_{\mathrm{PDE}}}{\partial \tilde{c}_i(x^j,t^j)}
\,
\frac{\partial \tilde{c}_i(x^j,t^j)}{\partial \theta}
\notag\\[1mm]
&\quad+
\sum_{k} \lambda_k
\frac{\partial \mathcal{L}_{\mathrm{BC},k}}{\partial \theta}
+
\lambda_{\mathrm{IC}}
\frac{\partial \mathcal{L}_{\mathrm{IC}}}{\partial \theta}.
\label{eq:loss_grad_general}
\end{align}

\smallskip
Since the RD operators are known analytically, the derivatives of the PDE residuals with respect to the predicted concentrations $\tilde{c}_A$ and $\tilde{c}_S$ admit closed-form expressions. The remaining terms reduce to derivatives of quantum expectation values with respect to either variational circuit parameters or embedding parameters. These
derivatives are computed using a hybrid differentiation strategy that combines parameter-shift rules for the quantum components with automatic differentiation for the classical embedding network.

\smallskip
This separation of responsibilities ensures that quantum evaluations are required only for the variational circuit, while all embedding-related gradients are computed classically. As a result, the overall training procedure remains efficient and scalable, even for dense collocation grids and stiff nonlinear RD dynamics. To clarify the interpretation of the quantum derivatives appearing in the above expressions, we make the following remark.

\begin{remark}
Expressions involving derivatives with respect to individual quantum gates should be interpreted as shorthand for derivatives with respect to the corresponding gate parameters. The present formulation makes this dependence explicit by expressing all
quantum derivatives in terms of the corresponding continuous rotation parameters. 
\end{remark}

\begin{prop}[Gradient of the PDE Loss w.r.t.\ Embedding Parameters]
\label{prop:pde_grad_chi}
Let $R_A$ and $R_S$ denote the PDE residuals
\[
R_A=\mathcal{D}_A[\tilde c_A,\tilde c_S]-q_A,
\qquad
R_S=\mathcal{D}_S[\tilde c_A,\tilde c_S]-q_S,
\]
evaluated at a finite interior collocation set
$\mathcal{S}_{\mathrm{PDE}}\subset\Omega\times(0,T]$,
assumed to be a finite set of collocation points.
The PDE loss is
\[
\mathcal{L}_{\mathrm{PDE}}
=
\sum_{(x^j,t^j)\in\mathcal{S}_{\mathrm{PDE}}}
\big(R_A(x^j,t^j)^2 + R_S(x^j,t^j)^2\big).
\]
Define the residual sensitivities
\[
G_A
=
R_A\,\frac{\partial R_A}{\partial \tilde c_A}
+
R_S\,\frac{\partial R_S}{\partial \tilde c_A},
\qquad
G_S
=
R_A\,\frac{\partial R_A}{\partial \tilde c_S}
+
R_S\,\frac{\partial R_S}{\partial \tilde c_S}.
\]
Then the gradient of $\mathcal{L}_{\mathrm{PDE}}$ with respect to the embedding parameters $\theta_{emb}$ is given by
\begin{equation}
\label{eq:prop_final_chi_grad}
\begin{split}
\frac{\partial \mathcal{L}_{\mathrm{PDE}}}{\partial \theta_{emb}}
&=
2\sum_{(x^j,t^j)\in\mathcal{S}_{\mathrm{PDE}}}
\sum_{i\in\{A,S\}}
G_i(x^j,t^j)
\\
&\quad\times
\sum_{m=1}^{N_q}
\frac{\partial \tilde c_i}{\partial \alpha_m}
\frac{\partial \alpha_m( \tilde x^j,\tilde t^j;\theta_{emb})}{\partial \theta_{emb}}.
\end{split}
\end{equation}

\end{prop}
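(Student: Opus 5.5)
The plan is to differentiate the finite sum defining $\mathcal{L}_{\mathrm{PDE}}$ term by term and then apply the chain rule twice: first from the loss to the model outputs $\tilde c_A,\tilde c_S$, then from the outputs to the embedding angles $\alpha_m$, and finally to $\theta_{emb}$. Because $\mathcal{S}_{\mathrm{PDE}}$ is a finite set, differentiation commutes with the summation. For each collocation point,
\[
\frac{\partial}{\partial\theta_{emb}}\big(R_A^2+R_S^2\big)=2R_A\frac{\partial R_A}{\partial\theta_{emb}}+2R_S\frac{\partial R_S}{\partial\theta_{emb}}.
\]
This produces the overall factor $2$ in \eqref{eq:prop_final_chi_grad}.

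\textbf{Step 1: residuals to outputs.} I treat the residuals $R_A,R_S$ as functions of the local values $\tilde c_A,\tilde c_S$ at the collocation point. This is consistent with the sensitivities $\partial R_k/\partial\tilde c_i$ appearing in the definitions of $G_A,G_S$. For the reaction terms the partials are explicit. For example, $\partial R_A/\partial\tilde c_A$ contains $2\kappa_1\tilde c_A\tilde c_S-\kappa_2$, and $\partial R_S/\partial\tilde c_S$ contains $-\kappa_1\tilde c_A^2$. The chain rule then gives
\[
\frac{\partial R_k}{\partial\theta_{emb}}=\sum_{i\in\{A,S\}}\frac{\partial R_k}{\partial\tilde c_i}\frac{\partial\tilde c_i}{\partial\theta_{emb}},\qquad k\in\{A,S\}.
\]
Regrouping by $i$ collects exactly $G_i=R_A\,\partial R_A/\partial\tilde c_i+R_S\,\partial R_S/\partial\tilde c_i$. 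This is consistent with the closed-form structure noted after \eqref{eq:loss_grad_general}.

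\textbf{Step 2: outputs to embedding parameters.} The output $\tilde c_i$ depends on $\theta_{emb}$ only through the angles $\alpha_m(\tilde x,\tilde t;\theta_{emb})$ in $U_{\mathrm{enc}}=\bigotimes_m R_y(\alpha_m)$, while $U_{\mathrm{var}}$ and $\mathcal{O}_i$ are independent of $\theta_{emb}$. This is the same argument that gives \eqref{eq:input_derivative}, with $x$ replaced by $\theta_{emb}$. It yields
\[
\frac{\partial\tilde c_i}{\partial\theta_{emb}}=\sum_{m=1}^{N_q}\frac{\partial\tilde c_i}{\partial\alpha_m}\frac{\partial\alpha_m}{\partial\theta_{emb}}.
\]
The first factor is obtained from the parameter-shift rule. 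The second factor comes from backpropagation in the FNN case, or from a second parameter-shift step through the embedding PQC in the QNN case. Substituting this into Step 1 and summing over points gives \eqref{eq:prop_final_chi_grad}.

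\textbf{Main obstacle.} The delicate point is Step 1. The residuals also depend on the derivatives $\partial_t\tilde c_i$ and $\nabla^2\tilde c_i$, and these depend on $\theta_{emb}$ through $\partial\alpha_m/\partial x$ and related terms, not only through the pointwise values $\tilde c_i$. I would handle this by stating explicitly the interpretation that matches the proposition: the sensitivities $\partial R_k/\partial\tilde c_i$ are understood as (Fréchet) derivatives of the residual operator, acting on $\partial\tilde c_i/\partial\theta_{emb}$. Under this reading the formula holds as a formal chain rule. Alternatively, the formula holds exactly when the derivative contributions are accounted for within the autodiff graph. I would remark that the full expression adds analogous terms involving $\partial_\theta\partial_t\tilde c_i$ and $\partial_\theta\nabla^2\tilde c_i$, which are computed by differentiating \eqref{eq:input_derivative}.
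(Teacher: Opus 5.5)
Your proposal follows the paper's proof essentially step for step. It expands the derivative of the finite sum of squared residuals to obtain the factor $2$, applies the chain rule through $\tilde c_A,\tilde c_S$ and regroups into $G_A,G_S$, and then applies the chain rule once more through the angles $\alpha_m$, with $\partial\tilde c_i/\partial\alpha_m$ obtained from the parameter-shift rule. The paper handles your ``main obstacle'' the same way you propose, by declaring $\partial R_k/\partial\tilde c_i$ to be functional derivatives induced by $\mathcal{D}_A,\mathcal{D}_S$ and evaluated via automatic differentiation. Your remark that a purely pointwise reading would omit the contributions from $\partial_t\tilde c_i$ and $\nabla^2\tilde c_i$ is a correct caveat that the paper leaves implicit.
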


\begin{proof}
Differentiating 
$\mathcal{L}_{\mathrm{PDE}}$ with respect to $\theta_{emb}$ yields
\[
\frac{\partial \mathcal{L}_{\mathrm{PDE}}}{\partial \theta_{emb}}
=
2\sum_{(x^j,t^j)\in\mathcal{S}_{\mathrm{PDE}}}
\left(
R_A\,\frac{\partial R_A}{\partial \theta_{emb}}
+
R_S\,\frac{\partial R_S}{\partial \theta_{emb}}
\right).
\]
Each residual depends on $\theta_{emb}$ only through the predicted concentrations $\tilde c_A$ and $\tilde c_S$. Applying the chain rule gives
\[
\frac{\partial R_k}{\partial \theta_{emb}}
=
\frac{\partial R_k}{\partial \tilde c_A}
\frac{\partial \tilde c_A}{\partial \theta_{emb}}
+
\frac{\partial R_k}{\partial \tilde c_S}
\frac{\partial \tilde c_S}{\partial \theta_{emb}},
\qquad k\in\{A,S\}.
\]
Here, the derivatives $\partial R_k / \partial \tilde c_i$ are understood in the sense of functional derivatives induced by the differential operators $\mathcal{D}_A$ and $\mathcal{D}_S$, and are evaluated via automatic differentiation.
Substituting these expressions and grouping terms yields
\[
\frac{\partial \mathcal{L}_{\mathrm{PDE}}}{\partial \theta_{emb}}
=
2\sum_{(x^j,t^j)\in\mathcal{S}_{\mathrm{PDE}}}
\Big[
G_A\,\frac{\partial \tilde c_A}{\partial \theta_{emb}}
+
G_S\,\frac{\partial \tilde c_S}{\partial \theta_{emb}}
\Big].
\]
The predicted concentrations depend on $\theta_{emb}$ only through the embedding
angles $\alpha_m(\tilde x,\tilde t;\theta_{\mathrm{emb}})$. Applying the chain rule once more gives
\[
\frac{\partial \tilde c_i}{\partial \theta_{emb}}
=
\sum_{m=1}^{N_q}
\frac{\partial \tilde c_i}{\partial \alpha_m}
\frac{\partial \alpha_m(\tilde x,\tilde t;\theta_{\mathrm{emb}})}{\partial \theta_{emb}},
\qquad i\in\{A,S\},
\]
where the partial derivatives
$\partial \tilde c_i / \partial \alpha_m$ are evaluated using the
parameter-shift rule. Substituting this expression completes the proof.
\end{proof}

\begin{prop}[Gradient of the PDE Loss w.r.t.\ Variational Parameters]
\label{prop:pde_grad_psi}
Let $\mathcal{L}_{\mathrm{PDE}}$ be the PDE loss defined in
\eqref{eq:hybrid_loss}, and let $G_A$ and $G_S$ denote the residual sensitivity
terms defined in Proposition~\ref{prop:pde_grad_chi}. Let
$\theta_{var} = (\theta_1,\ldots,\theta_{N_{var} })$ denote the variational parameters of the quantum circuit $U_{\mathrm{var}}$.
Then, for each variational parameter $\theta_\ell$, the gradient of the PDE loss is given by
\begin{equation}
\label{eq:prop_final_psi_grad}
\frac{\partial \mathcal{L}_{\mathrm{PDE}}}{\partial \theta_\ell}
=
2\sum_{(x^j,t^j)\in\mathcal{S}_{\mathrm{PDE}}}
\sum_{i\in\{A,S\}}
G_i(x^j,t^j)
\frac{\partial \tilde c_i(x^j,t^j)}{\partial \theta_\ell},
\end{equation}
for $\ell=1,\ldots,N_{\mathrm{var}}$ where each partial derivative
$\partial \tilde c_i / \partial \theta_\ell$
is evaluated using the parameter-shift rule applied to the variational circuit 
$U_{\mathrm{var}}$.
\end{prop}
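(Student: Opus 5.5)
The proof will follow the template of Proposition~\ref{prop:pde_grad_chi}, with $\theta_\ell$ in place of $\theta_{emb}$ and without the final passage through the embedding angles. I will first differentiate the finite sum defining $\mathcal{L}_{\mathrm{PDE}}$ term by term; this is legitimate because $\mathcal{S}_{\mathrm{PDE}}$ is finite. This gives
\[
\frac{\partial \mathcal{L}_{\mathrm{PDE}}}{\partial \theta_\ell}
=2\sum_{(x^j,t^j)\in\mathcal{S}_{\mathrm{PDE}}}\Big(R_A\,\frac{\partial R_A}{\partial \theta_\ell}+R_S\,\frac{\partial R_S}{\partial \theta_\ell}\Big).
\]
Next I will observe that $q_A$, $q_S$ and $\boldsymbol{\beta}$ do not depend on $\theta_{var}$. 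Hence each residual $R_k$, $k\in\{A,S\}$, depends on $\theta_\ell$ only through $\tilde c_A$ and $\tilde c_S$. The chain rule then gives $\partial R_k/\partial\theta_\ell=\sum_{i}(\partial R_k/\partial\tilde c_i)(\partial\tilde c_i/\partial\theta_\ell)$.

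As in Proposition~\ref{prop:pde_grad_chi}, the factors $\partial R_k/\partial \tilde c_i$ will be read as the (Fréchet/functional) derivatives induced by $\mathcal{D}_A$ and $\mathcal{D}_S$, evaluated by automatic differentiation. Substituting and regrouping the double sum by species index $i$ collects the coefficients into exactly $G_A$ and $G_S$ as defined in Proposition~\ref{prop:pde_grad_chi}. This yields \eqref{eq:prop_final_psi_grad}.

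It remains to justify that $\partial\tilde c_i/\partial\theta_\ell$ is computable by the parameter-shift rule on $U_{\mathrm{var}}$. By \eqref{eq:var_layer_abstract}, $\theta_\ell$ appears in a single factor $\exp(-i\theta_\ell H_m)$ with $H_m$ a Pauli generator of a single-qubit rotation. Such a generator has two eigenvalues $\pm\tfrac12$ after the usual rescaling. Writing $\tilde c_i=\langle\psi_{\mathrm{enc}}|V_1^\dagger e^{i\theta_\ell H_m}V_2^\dagger\mathcal{O}_iV_2e^{-i\theta_\ell H_m}V_1|\psi_{\mathrm{enc}}\rangle$, I will show that $\tilde c_i$ is a trigonometric function of $\theta_\ell$ of frequency one. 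From this the standard two-point shift formula $\partial_{\theta_\ell}\tilde c_i=\tfrac12[\tilde c_i(\theta_\ell+\tfrac\pi2)-\tilde c_i(\theta_\ell-\tfrac\pi2)]$ follows, with the shift adapted to the normalization of $H_m$. The encoded state does not depend on $\theta_{var}$, so no embedding term appears here, unlike in Proposition~\ref{prop:pde_grad_chi}.

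The main subtlety is not algebraic but interpretive. The residuals contain $\partial_t\tilde c_i$ and $\nabla^2\tilde c_i$, so strictly $\partial R_k/\partial\theta_\ell$ also involves $\partial_\theta$ of input derivatives of $\tilde c_i$, not only of $\tilde c_i$ itself. I would handle this exactly as in the proof of Proposition~\ref{prop:pde_grad_chi}, by the functional-derivative convention. I would also note that, since $\theta_\ell$ and $(x,t)$ are independent variables and $\tilde c_i$ is smooth in both, mixed derivatives commute. Consequently the parameter-shift rule applies equally to $\partial_x\tilde c_i$ and $\partial_x^2\tilde c_i$, both of which are again expectation values depending on $\theta_\ell$ only through the same gate.
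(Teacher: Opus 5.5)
Your proposal is correct and follows the paper's proof essentially step for step: you differentiate the finite sum of squared residuals, apply the chain rule through $\tilde c_A,\tilde c_S$ (reading $\partial R_k/\partial\tilde c_i$ as functional derivatives), regroup into $G_A,G_S$, and invoke the parameter-shift rule because $\theta_{\mathrm{var}}$ enters only through $U_{\mathrm{var}}$. Your derivation of the frequency-one trigonometric dependence on $\theta_\ell$, and your observation that the same shift rule also applies to $\partial_x\tilde c_i$ and $\partial_x^2\tilde c_i$, go beyond what the paper writes, but they fill in details rather than change the argument.
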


\begin{proof}
The PDE loss $\mathcal{L}_{\mathrm{PDE}}$ is a sum of squared residuals and
depends on the variational parameters $\theta_{var}$ only through the predicted
concentrations $\tilde c_A$ and $\tilde c_S$. Differentiating each residual
with respect to a variational parameter $\theta_\ell$ yields
\[
\frac{\partial R_k}{\partial \theta_\ell}
=
\frac{\partial R_k}{\partial \tilde c_A}
\frac{\partial \tilde c_A}{\partial \theta_\ell}
+
\frac{\partial R_k}{\partial \tilde c_S}
\frac{\partial \tilde c_S}{\partial \theta_\ell},
\qquad k \in \{A,S\},
\]
by the chain rule.

Substituting these expressions into the derivative of the squared residuals and collecting terms yields weighted sums of the derivatives $\partial \tilde c_i / \partial \theta_\ell$, where the weights are precisely the residual sensitivity terms $G_A$ and $G_S$ defined in Proposition~\ref{prop:pde_grad_chi}. The resulting expression therefore depends linearly on the derivatives of the predicted concentrations with respect to $\theta_\ell$.

Since the variational parameters appear exclusively in the unitary $U_{\mathrm{var}}$, each derivative $\partial \tilde c_i / \partial \theta_\ell$ corresponds to differentiating a quantum expectation value with respect to a single circuit parameter and is evaluated using the parameter-shift rule. This establishes the stated result.
\end{proof}

\begin{remark}
Propositions~\ref{prop:pde_grad_chi} and~\ref{prop:pde_grad_psi} jointly demonstrate that the gradient of the physics-informed loss admits an additive decomposition into contributions arising from the embedding and from the variational quantum circuit, thereby enabling efficient hybrid quantum-classical optimization.
\end{remark}

\begin{remark}[Unified Gradient Structure]
\label{rem:grad_structure}
For any trainable parameter $\theta\in\{\theta_{\mathrm{emb}},\theta_{\mathrm{var}}\}$ of the hybrid model, the gradient of the PDE loss admits the unified representation
\[
\frac{\partial \mathcal L_{\mathrm{PDE}}}{\partial \theta}
=
2 \sum_{(x_j,t_j)\in\mathcal S_{\mathrm{PDE}}}
\sum_{i\in\{A,S\}}
G_i(x_j,t_j)\,
\frac{\partial \tilde c_i(x_j,t_j)}{\partial \theta}.
\]
This unified form enables a single gradient evaluation pipeline in which the residual sensitivities are shared, while the parameter-specific derivatives are supplied by either Proposition~\ref{prop:pde_grad_chi} or Proposition~\ref{prop:pde_grad_psi}.
\end{remark}

This decomposition highlights the separation between classical and quantum
computational responsibilities and supports a modular implementation of the
hybrid training procedure.

\begin{algorithm}[t]
\small
\caption{Training Algorithm for the x-TE-QPINN Framework}
\label{alg:xteqpinn} 

\begin{algorithmic}[1]
\State \textbf{Input:}
\Statex \quad Reaction-diffusion operators $\mathcal{D}_i[\cdot]$ and source terms $q_i$, $i\in\{A,S\}$
\Statex \quad Boundary operators $\mathcal{B}_k[\cdot]$ and boundary data $b_k$
\Statex \quad Initial condition $c_0(x)$

\Statex \quad Collocation sets $\mathcal{S}_{\mathrm{PDE}}, \{\mathcal{S}_{\mathrm{BC},k}\}, \mathcal{S}_{\mathrm{IC}}$

\Statex \quad Loss weights $\{\lambda_k\}, \lambda_{\mathrm{IC}}$

\Statex \quad Maximum iterations $N_{\max}$, tolerance $\varepsilon$

\Statex \quad Embedding type: classical (FNN-based) or quantum (QNN-based)

\State \textbf{Initialization:}
\State Initialize embedding parameters $\theta_{\mathrm{emb}}$
\State Initialize variational circuit parameters $\theta_{\mathrm{var}}$
\State Select readout observables $\mathcal{O}_A, \mathcal{O}_S$
\State Normalize the input coordinates $(x,t)$ to the scaled variables $(\tilde{x},\tilde{t}) \in [-1,1]^2$
\State Initialize optimizer for parameters $(\theta_{\mathrm{emb}}, \theta_{\mathrm{var}})$

\While{not converged}

\State \textbf{Forward evaluation}
\For{each collocation point $(x^j,t^j)$}
    \State Compute embedding angles $\alpha(x^j,t^j;\theta_{\mathrm{emb}})$    
    \State Prepare encoded state $|\psi_{\mathrm{enc}}\rangle = U_{\mathrm{enc}}(\alpha)|0\rangle^{\otimes N_q}$    
    \State Apply variational circuit $|\psi_{\mathrm{var}}\rangle = U_{\mathrm{var}}(\theta_{\mathrm{var}})\,|\psi_{\mathrm{enc}}\rangle$
  
\State Evaluate predicted fields
    $\tilde c_i(x^j,t^j)=\langle\psi_{\mathrm{var}}|\mathcal{O}_i|\psi_{\mathrm{var}}\rangle$
    \State Compute required spatial and temporal derivatives via the hybrid chain rule
\EndFor

\State \textbf{Loss evaluation}
\State Compute $\mathcal{L}_{\mathrm{PDE}}, \mathcal{L}_{\mathrm{BC}}, \mathcal{L}_{\mathrm{IC}}$
\State $\mathcal{L}\leftarrow \mathcal{L}_{\mathrm{PDE}}+\sum_k\lambda_k\mathcal{L}_{\mathrm{BC},k}+\lambda_{\mathrm{IC}}\mathcal{L}_{\mathrm{IC}}$

\State \textbf{Gradient computation}
\State Compute $\nabla_{\theta_{\mathrm{emb}}}\mathcal L$ (Proposition~1)

\State Compute $\nabla_{\theta_{\mathrm{var}}}\mathcal L$ (Proposition~2)

\State \textbf{Parameter update}
\State Update parameters $({\theta_{\mathrm{emb}}},{\theta_{\mathrm{var}}} )$ using the chosen optimizer

\EndWhile

\State \textbf{Output:}
\State Trained parameters $(\theta_{\mathrm{emb}}^\star,\theta_{\mathrm{var}}^\star)$ and learned solution fields $\tilde c_A(x,t), \tilde c_S(x,t)$

\end{algorithmic}
\end{algorithm}

\begin{remark}[On circuit architecture: single vs.\ species-specific circuits] In practice, two principal architectural choices arise for QPINNs applied to multi-species RD systems:
(i) a single variational quantum circuit with shared parameters and multi-output readout, independently of the chosen embedding strategy, as adopted in Algorithm~\ref{alg:xteqpinn}, and (ii) species-specific variational circuits acting on a shared encoded quantum state. The former is more resource-efficient, requiring only a single variational circuit per collocation point (up to parameter-shift evaluations), but induces coupling between species through shared parameters. The latter increases quantum resource requirements but allows each species to be modeled with independent parameters, improving expressivity and interpretability and more closely reflecting the structure of coupled RD systems. The choice between these architectures should be guided by the complexity of the dynamics, available quantum resources, and the desired balance between flexibility and computational efficiency.
\end{remark}

\Figure[t!](topskip=0pt, botskip=0pt, midskip=0pt)[width=2.0\columnwidth]{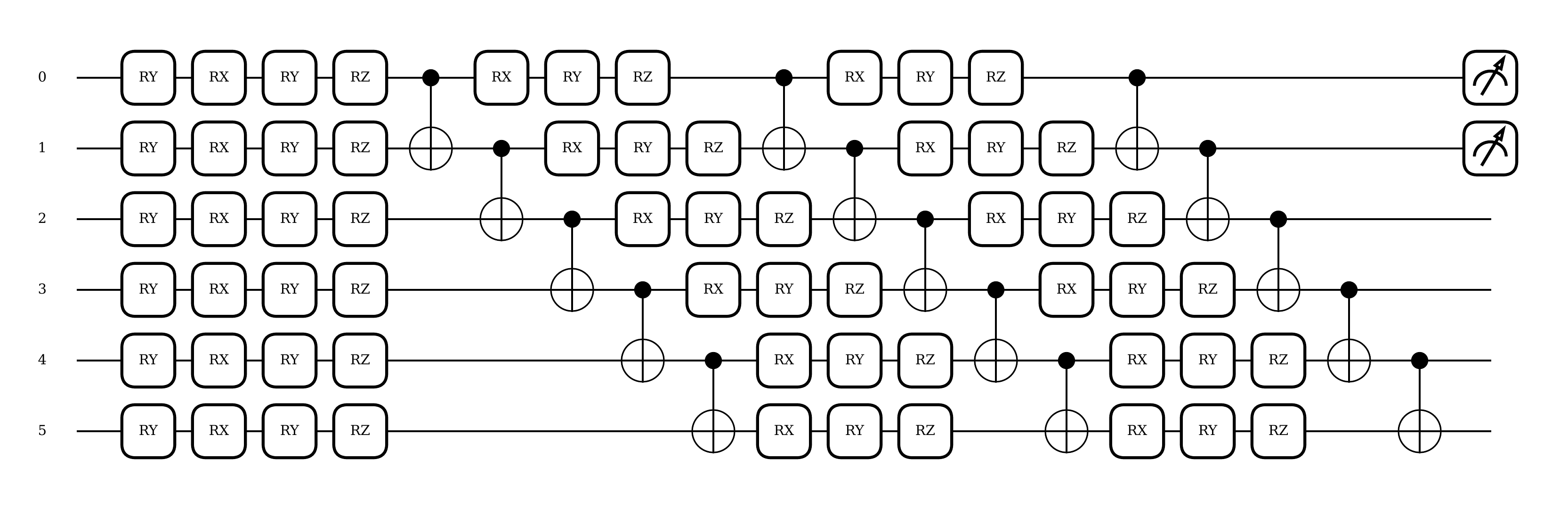}
{A quantum circuit architecture of the PQC layers used in our experiment, also known as the ansatz, serves as the heart of the x-TE-QPINN framework, analogous to the layers in classical neural networks, which consists of six qubits and three layers. This specific design utilizes a hardware-efficient ansatz, which is recognized as one of the most optimized architectures for learnable quantum circuits, particularly in maximizing performance within hardware constraints.
\label{pqc_layer}}

\Figure[t!](topskip=0pt, botskip=0pt, midskip=0pt)[width=1.8\columnwidth]{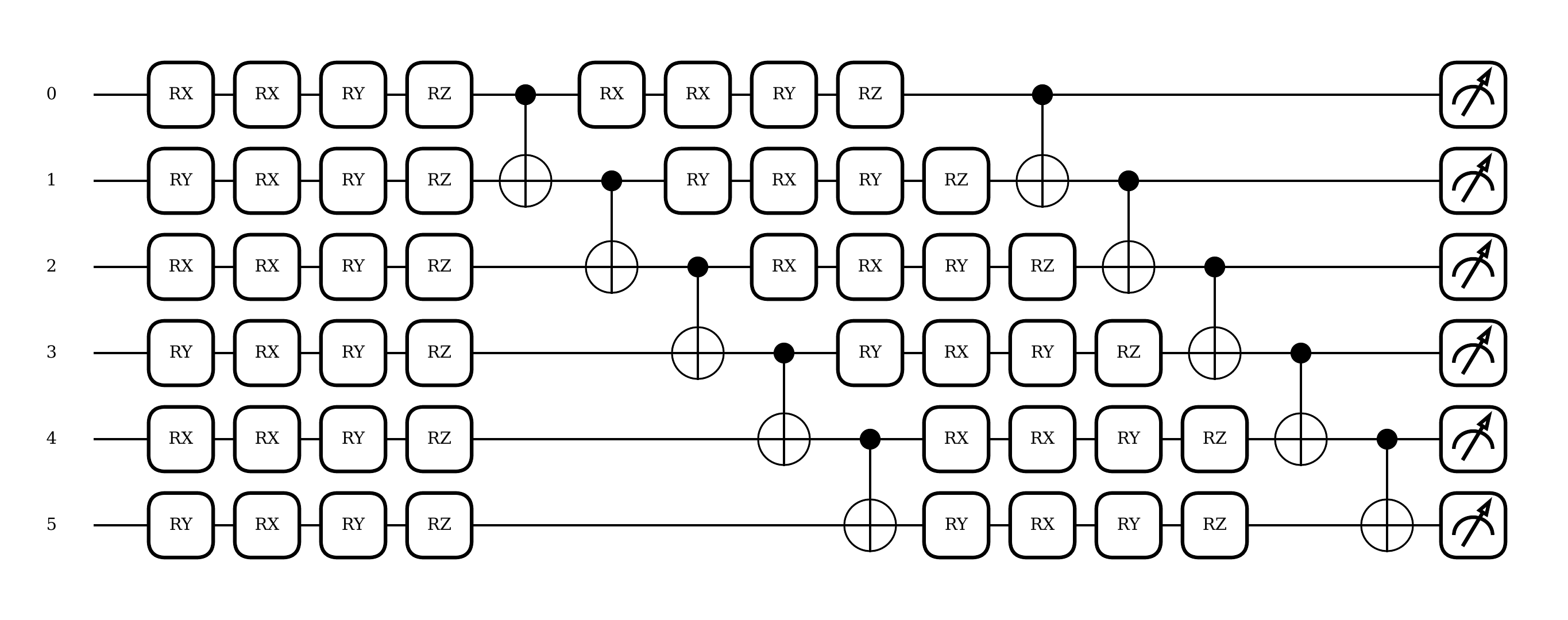}
{A quantum circuit sample for the QNN-based embedding function consists of six qubits and two layers for the quantum transformation stage. Throughout the experiments, the number of qubits and layers in the QNN were specifically tuned to match the parameter count of the PQC layers.
\label{embedding_layer}}


\subsection{Computational Cost and Parameter Scaling}

Training x-TE-QPINN is performed by minimizing a physics-informed loss over a set of $N_c$ collocation points using gradient-based optimization. The resulting computational cost per training iteration naturally decomposes into classical and quantum components. The dominant quantum cost arises from evaluating gradients of quantum expectation values using the parameter-shift rule. These evaluations are required for computing derivatives of the predicted concentration fields with respect to the variational circuit parameters and, in the case of a quantum embedding, with respect to the embedding parameters as well. As a result, the number of quantum circuit evaluations per training iteration scales linearly (up to constant factors arising from parameter-shift evaluations) with the number of trainable quantum parameters and with the number of collocation points.

In particular, for the hardware-efficient variational circuit considered here, acting on $n$ qubits with $L$ layers, the number of trainable variational parameters scales linearly in $nL$. When a quantum embedding is employed, additional quantum evaluations are required for the embedding circuit parameters. By contrast, when a classical feedforward neural network is used for the embedding, all embedding-related gradients are computed via standard backpropagation and do not incur additional quantum circuit evaluations.

\Figure[t!](topskip=0pt, botskip=0pt, midskip=0pt)[width=2.0\columnwidth]{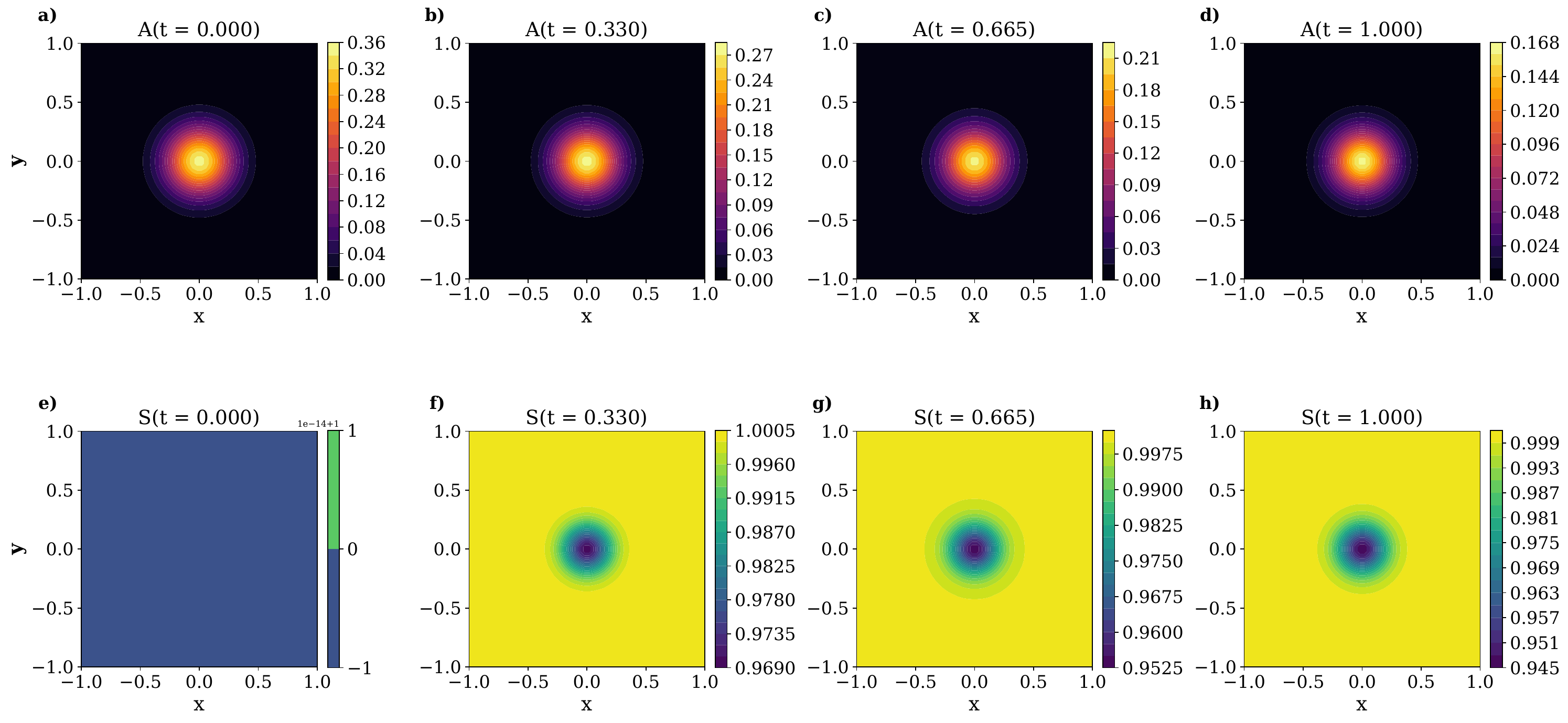}
{The exact reference solution of the two-dimensional RD equation was solved using the Runge-Kutta 45 solver with the Gaussian initial condition and periodic boundary conditions at time t=0, t = 0.330, t= 0.665 and t = 1.
\label{ref-sol-2d}}

\Figure[t!](topskip=0pt, botskip=0pt, midskip=0pt)[width=0.9\columnwidth]{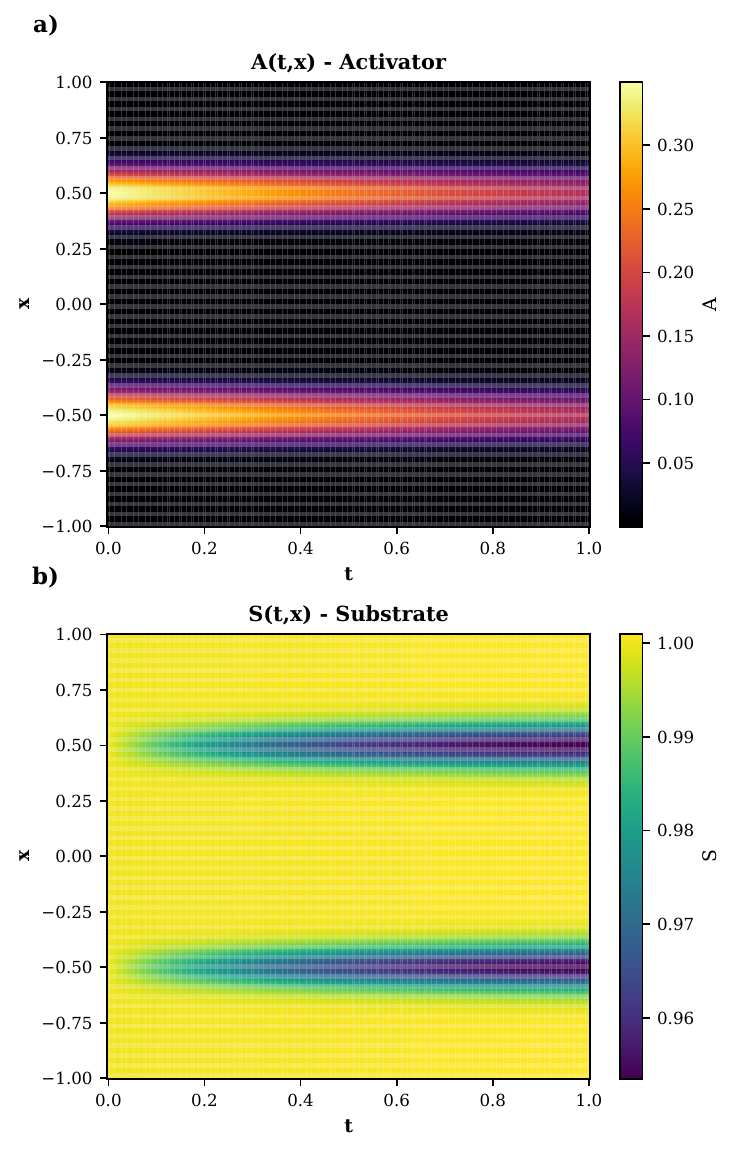}
{The exact reference solution of the one-dimensional RD equation was solved using the Runge-Kutta 45 solver with the double bump initial condition and periodic boundary conditions.
\label{ref-sol-1d}}

The classical computational cost is dominated by collocation sampling, evaluation of PDE residuals, boundary and initial-condition terms, and automatic differentiation through the embedding network. These operations scale polynomially with the number of collocation points and with the size of the classical embedding model. This clear separation of responsibilities enables a flexible hybrid training procedure in which quantum resources are reserved for representing and differentiating variational quantum states, while classical computation manages data handling, residual evaluation, and embedding optimization. In addition to per-iteration computational cost, the overall scalability of the x-TE-QPINN framework is determined by the growth of the model’s trainable parameter count.

The total number of trainable parameters in the x-TE-QPINN framework grows polynomially with the problem size and the chosen model architecture. The number of variational quantum parameters scales linearly with the number of qubits and the depth of the variational circuit, consistent with hardware-efficient ans\"atze designed for near-term quantum devices. When a classical embedding network is employed, the total parameter count includes both the variational circuit parameters and the weights of the embedding network, which scale with the width and depth of the chosen feedforward neural network.

\Figure[t!](topskip=0pt, botskip=0pt, midskip=0pt)[width=2.0\columnwidth]{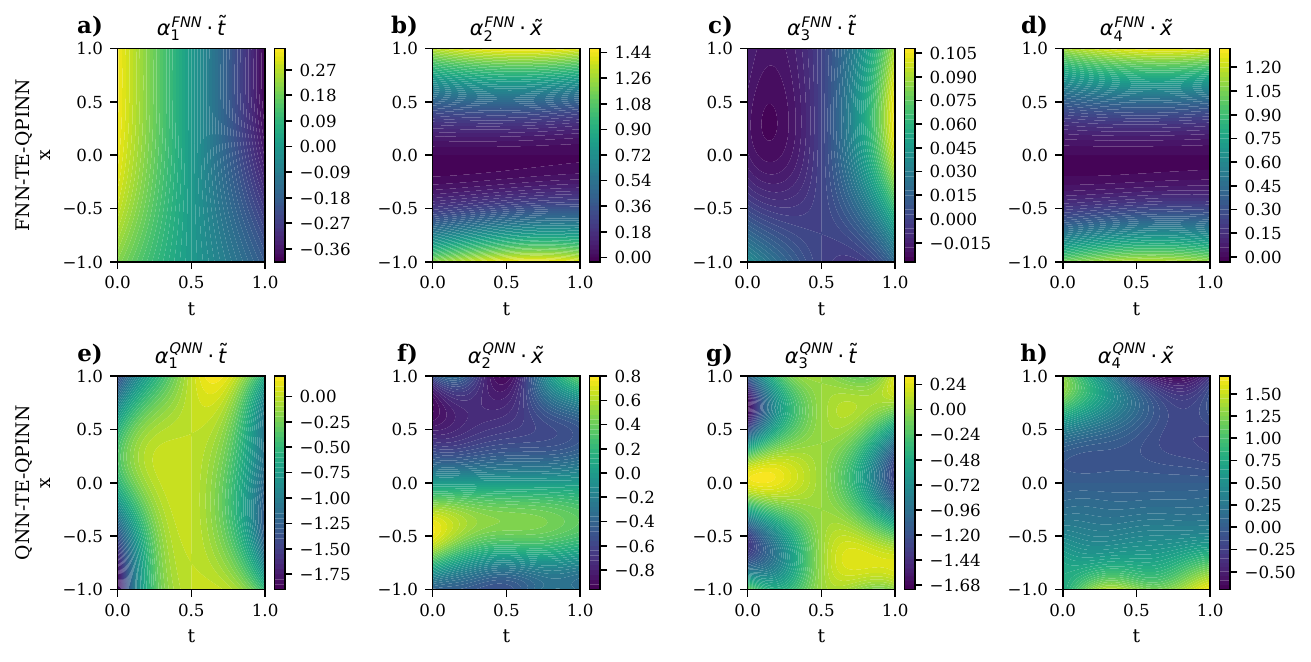}
{The output of activator A's embedding function of FNN and QNN for the one-dimensional RD equation, wherein the embedding function $\alpha$ is then further multiplied by $\tilde{x}$ or $\tilde{t}$ to enhance the quality of the training process for the subsequent layers of the x-TE-QPINN.
\label{emb}}

When a quantum embedding is used instead (QNN-TE-QPINN), the embedding parameters correspond to the gate parameters of a secondary, shallow and fixed-depth quantum circuit. Importantly, in both cases, the number of quantum parameters remains moderate and does not scale exponentially with the exponential dimension of the underlying Hilbert space. This polynomial parameter scaling ensures that the model remains trainable using standard gradient-based optimizers and allows the architecture to be adapted to available quantum and classical computational resources.

\subsection{Representational Capacity and Potential Quantum Advantages}
\label{subsec:quantum_representation}

Each variational quantum circuit in the x-TE-QPINN framework operates on an $n$-qubit Hilbert space $(\mathbb{C}^2)^{\otimes n}$ of dimension $2^n$. While this exponential dimension does not imply efficient exploration or coverage of the full state space, it provides a high-dimensional representational substrate, i.e., a $2^n$-dimensional Hilbert space of
quantum states that serves as the function representation space for the variational circuit,
in which nonlinear transformations can be implemented using a comparatively compact set of trainable parameters. This contrast between the exponential dimension of the representation space and the polynomial scaling of the parameter count highlights a key structural property of the x-TE-QPINN architecture: access to high-dimensional representations through a structured and resource-efficient parameterization.

\Figure[t!](topskip=0pt, botskip=0pt, midskip=0pt)[width=0.9\columnwidth]{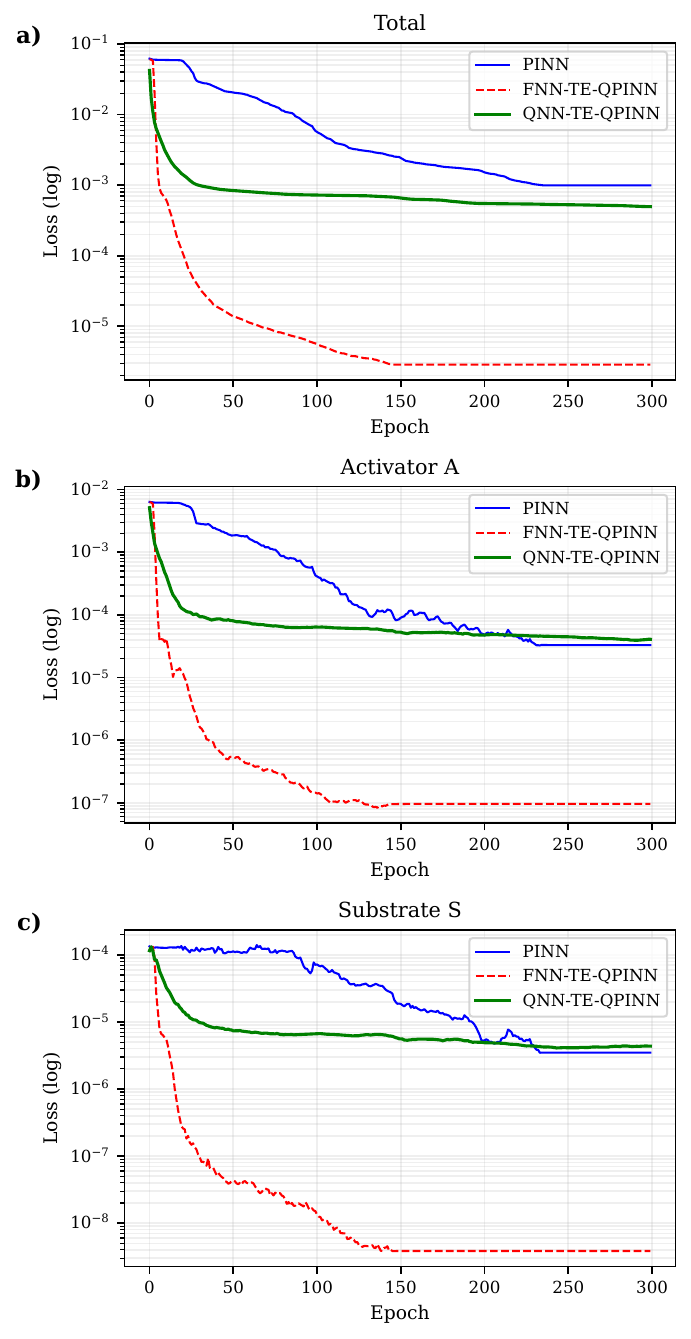}
{A comparison graph of the performance of three models: PINN, FNN-TE-QPINN, and QNN-TE-QPINN for the one-dimensional RD equation. The two QPINN models are configured with four qubits and ten layers of PQC. All three models are trained for 300 epochs on REPACSS: a) Total loss calculated for both species A (Activator) and S (Substrate). b) Loss calculated separately for the activator A. c) Loss calculated separately for the substrate S.
\label{train_perf_three}}

\Figure[t!](topskip=0pt, botskip=0pt, midskip=0pt)[width=0.9\columnwidth]{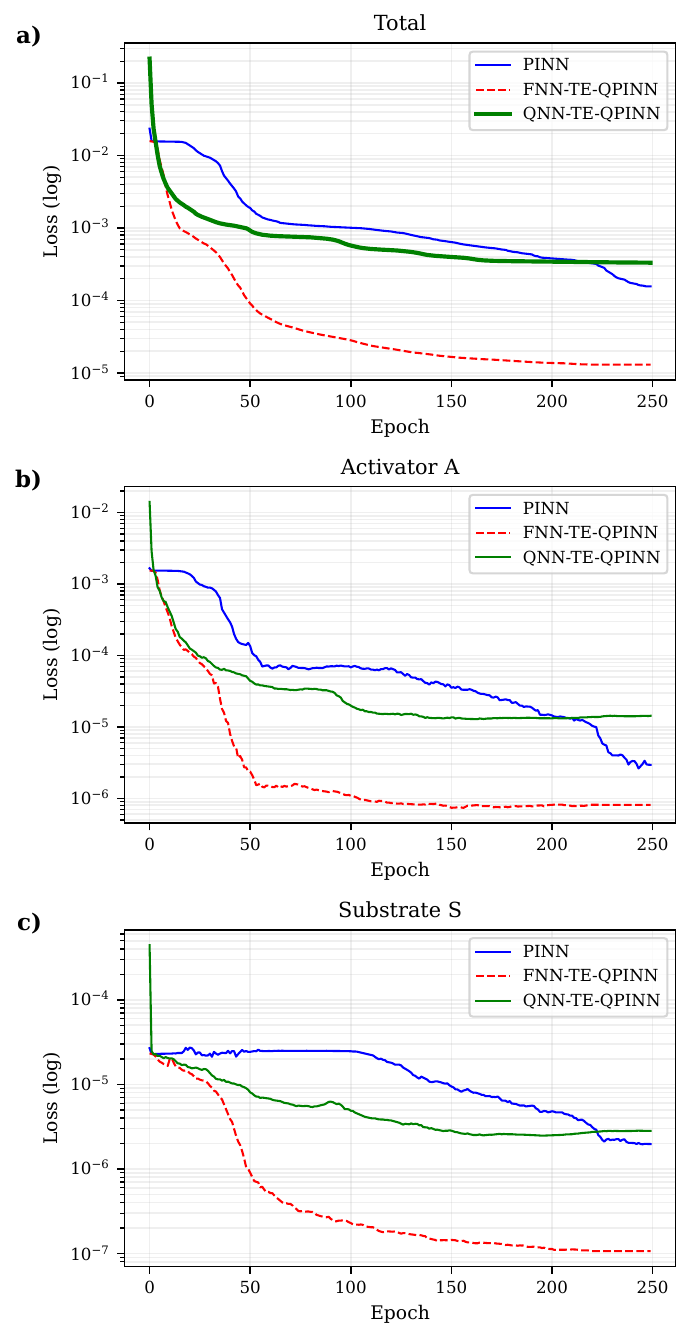}
{A comparison graph of the performance of three models—PINN, FNN-TE-QPINN, and QNN-TE-QPINN—after training for 250 epochs for the two-dimensional RD equation. Both the FNN-TE-QPINN and QNN-TE-QPINN models utilize four qubits and ten PQC layers for the main quantum circuit. Specifically, QNN-TE-QPINN also employs an additional four qubits and five PQC layers for its embedding component. a) Total loss. b) Loss of A. c) Loss of S.
\label{rd_2d_model_comparison}}

Quantum circuits naturally implement nonlinear transformations through parameterized unitary operations acting on such high-dimensional spaces, which may offer expressive representations for complex solution manifolds. While trainability and optimization remain nontrivial, this structural expressivity may be beneficial for learning solution landscapes that are challenging to approximate using classical neural networks of comparable size. Such properties are particularly relevant for RD systems exhibiting multiscale behavior, sharp spatial gradients, and strong nonlinear interactions between species.

At the same time, the hybrid architecture of x-TE-QPINN enables classical and quantum computational resources to be allocated adaptively. Classical computation handles collocation sampling, normalization, residual evaluation, and embedding-network training, while quantum circuits focus on representing and differentiating variational quantum states. As quantum hardware continues to improve in terms of qubit counts, coherence times, and gate fidelities, the structural properties identified in this work may become increasingly relevant for quantum-assisted scientific machine learning. The x-TE-QPINN framework thus provides a principled approach for integrating variational quantum models into physics-informed learning pipelines, offering a well-defined trade-off between computational cost, representational capacity, and model flexibility.

\section{Experimental Environment and Result Analysis}

\Figure[t!](topskip=0pt, botskip=0pt, midskip=0pt)[width=2.0\columnwidth]{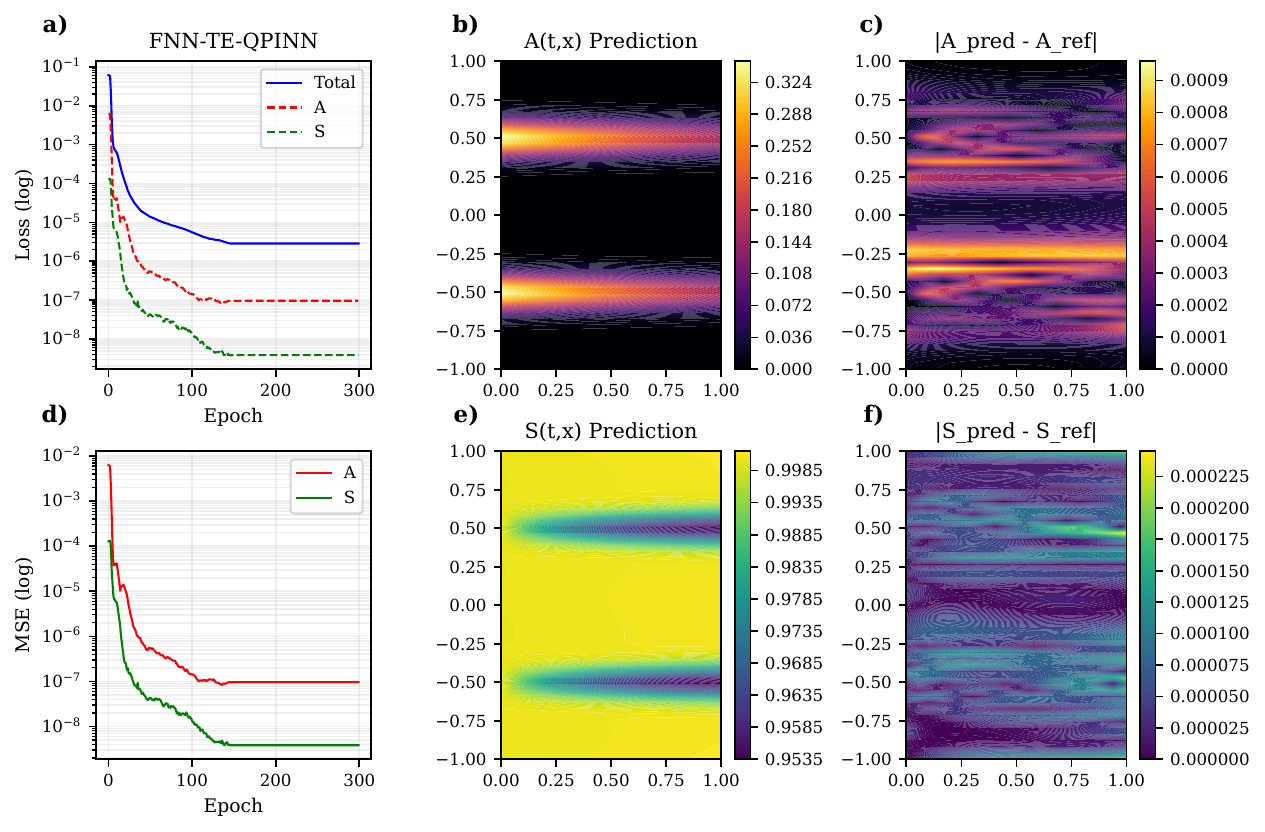}
{Results of solving the one-dimensional RD equation using FNN-TE-QPINN during the training session with 300 epochs: a) \& d) Total loss, individual loss, and MSE of FNN-TE-QPINN individual loss for activator A and substrate S species. b) \& e)  Predicted result for A and S by FNN-TE-QPINN. c) \& f) Absolute error between the FNN-TE-QPINN prediction for A and S with the reference solution (solved using RK45).
\label{TE-QPINN-sol}}

In this section, we present the experimental design process and the results obtained from the experiments conducted in this research. Fig. \ref{pqc_layer} illustrates the quantum circuit for the learning component, designed to function similarly to a classical neural network. This circuit incorporates eight qubits for the embedding stage and a PQC with a depth of 5 layers. Fig. \ref{embedding_layer} illustrates a specific quantum circuit designed for the embedding process, serving a function analogous to that of a classical feed-forward neural network (FNN) block. This particular circuit utilizes two input qubits and has a depth of 3 layers. Throughout the experiments, the number of qubits and layers in the QNN were specifically tuned to match the parameter count of the PQC layers.

\subsection{Experiment Setup}

\Figure[t!](topskip=0pt, botskip=0pt, midskip=0pt)[width=2.0\columnwidth]{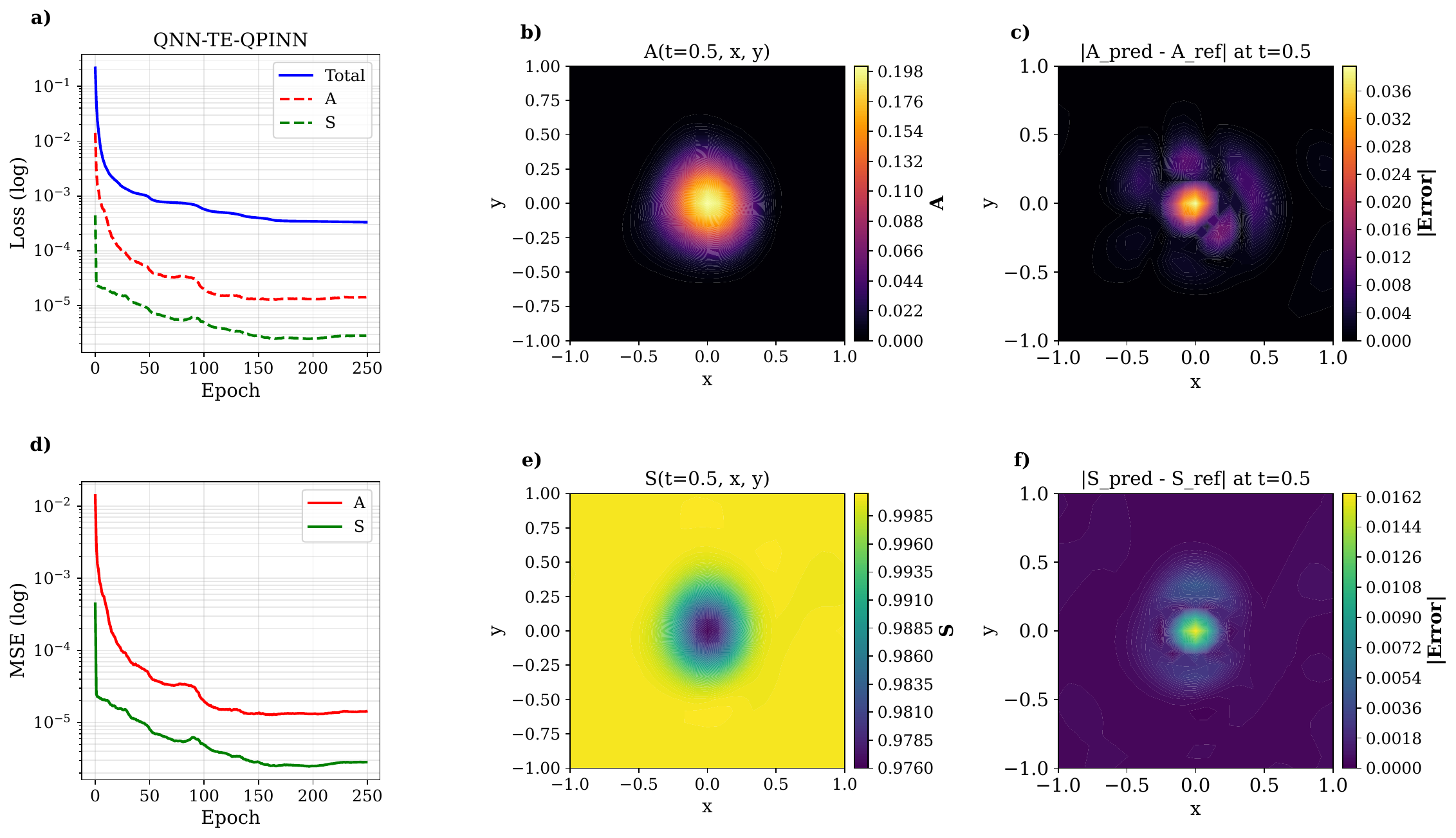}
{Performance of QNN-TE-QPINN trained for 250 epochs with four qubits and ten PQC layers for the two-dimensional RD equation. a) \& d) Total and individual loss, and the MSE for Activator A and Substrate S  of QNN-TE-QPINN. b) \& e) QNN-TE-QPINN model predictions at $t = 0.5$. c) \& f) Absolute error of QNN-TE-QPINN for A and S compared to the RK45 reference solution at $t = 0.5$.
\label{rd_2d_training_fnn_basis}}

To conduct the experiments, we used Python as the primary programming language. We utilized Facebook's classical deep learning library, PyTorch \cite{pytorch}, to design the architecture and classical deep learning neuron layers. For quantum deep learning, we used Xanadu's Pennylane \cite{pennylane} quantum machine learning library to create the trainable quantum circuits. We chose PennyLane because it offers accelerated parameter optimization packages and can be easily integrated with real quantum computing service providers on the cloud, such as IBM, Rigetti, Google and so on, to infer the model after training in a simulator environment. For the experimental hardware, we utilized the REmotely-managed Power Aware Computing Systems and Services (REPACSS) \cite{repacss}, a high-performance computing data center and AI infrastructure prototype sponsored by Texas Tech University, to conduct the research experiments. Specifically, we utilized the GPU compute nodes, each configured with 4 × NVIDIA H100 NVL GPUs with 94 GB of HBM per GPU, 512 GB of memory, and 64 cores per node. The x-TE-QPINNs models were trained using PyTorch's L-BFGS optimizer with default search parameters, applying the strong Wolfe condition. Table \ref{tab:Params} describes the input parameters of the classical neural networks and quantum neural networks that we used in the experiments of this study.

\begin{table}
\centering
\caption{The x-TE-QPINN hyperparameters used in our experiments.}
\label{table}
\setlength{\tabcolsep}{3pt}
\begin{tabular}{p{90pt}  p{30pt} p{90pt}}
\hline
\hline
\textbf{Parameter's Name} & \textbf{Value} & \textbf{Description} \\
\hline
\hline
N\_WIRES &     $2 \rightarrow 8$  &   The number of qubits in the PQC and QNN embedding quantum circuits. \\
N\_LAYERS &    $5 \rightarrow 20$   &   The number of layers in the PQC and QNN embedding quantum circuits.  \\
NEURONS\_FNN &  10   &   The number of neurons in each classical FNN's hidden layer.  \\
HIDDEN\_LAYERS\_FNN &  2   &   The number of hidden layers in the classical FNN.  \\
NEURONS\_PINN &  32   &   The number of neurons in each PINN's hidden layer.  \\
HIDDEN\_LAYERS\_PINN &  4   &   The number of hidden layers in the PINN.  \\
\hline
\hline
\end{tabular}
\label{tab:Params}
\end{table}

\subsection{Data Processing}

Unlike machine learning models trained with traditional data, such as time series in numerical form, texts as vectors, or images as pixels, PINNs and QPINNs utilize data in the form of collocation points. Collocation points are a set of coordinate points $(x_{i},...,x_{j}, t)$—where $i$,..., $j$ denotes the dimensionality of the equation—that are sampled from the computational domain. At these points, the neural network must satisfy the PDE, the boundary conditions, and the initial conditions. In traditional numerical methods, such as the Finite Difference Method (FDM) \cite{Courant1928} or the Finite Volume Method (FVM) \cite{Patankar1980}, the PDE is enforced at all defined grid points. Conversely, in PINNs and QPINNs, the neural networks act as a continuous approximation function, and these collocation points are used to train the network. Fig. \ref{xTE-QPINN-Arch} illustrates the collocation points for the one-dimensional RD equation, $(x_{i=j=0}, t)$. 

\subsection{Inference Solutions}

We employed the fourth- and fifth-order Runge-Kutta method (RK45) \cite{Dormand1980} to solve the RD differential equation, using this solution as the baseline for comparison with the solution predicted by our proposed x-TE-QPINN method. Fig. \ref{ref-sol-1d} shows the RK45 solution as a heatmap on a grid size of $(x, t) = (1, 20)$. RK45 is one of the most advanced and widely used numerical methods for solving ordinary differential equations (ODEs) and PDEs, particularly in complex dynamic systems. It is not only a high-order method but also an adaptive method, which allows it to yield more accurate results than fixed-step methods.  Fig. \ref{ref-sol-2d} illustrates the solution of the two-dimensional RD equation solved using the classical RK45 method, with periodic boundary conditions and random initial conditions.

\subsection{Embedding}

\Figure[t!](topskip=0pt, botskip=0pt, midskip=0pt)[width=2.0\columnwidth]{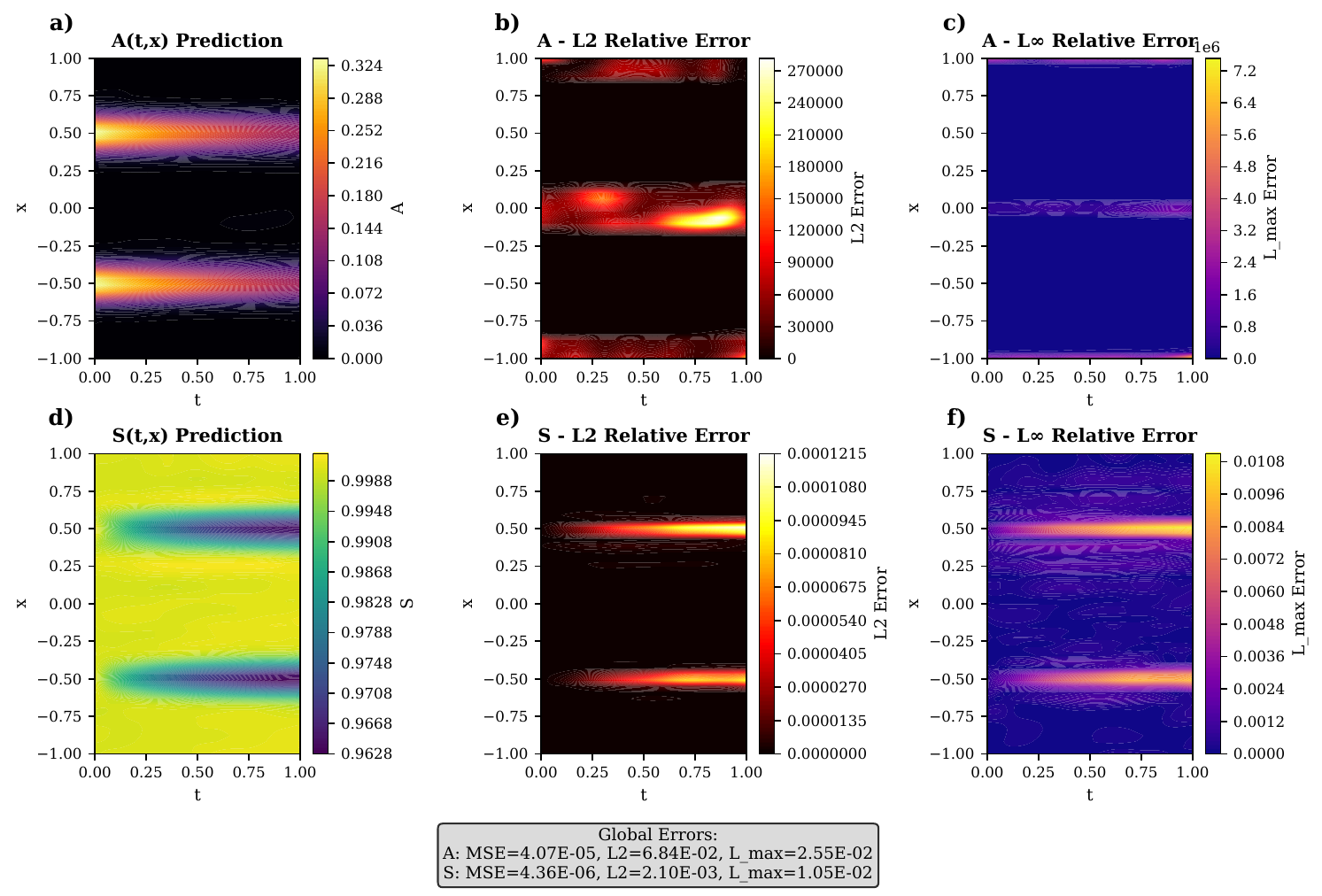}
{Inference results of solving one-dimensional RD equation of the QNN-TE-QPINN model with a configuration of four qubits, ten PQC layers, five embedding layers, after being trained for 300 epochs: a) \& d) Prediction results of activator A and substrate S species. b) \& e) $L_2$ relative error of A and S. c) \& f) $L_\infty$ ($L_{max}$) of A and S.
\label{1d_infer_perf}}

\Figure[t!](topskip=0pt, botskip=0pt, midskip=0pt)[width=2.0\columnwidth]{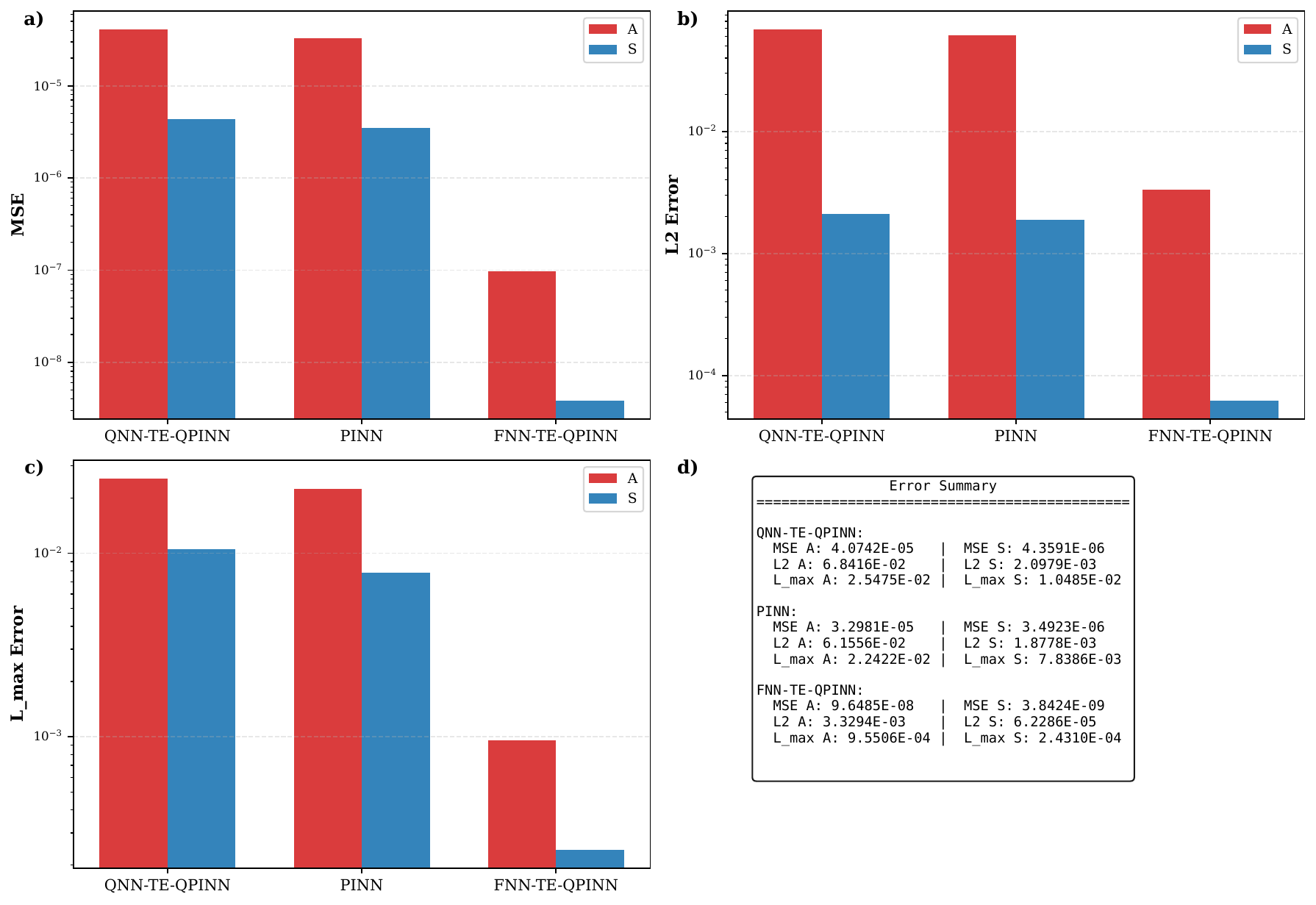}
{This report summarizes the inference results of the PINN, FNN-TE-QPINN, and QNN-TE-QPINN models for solving the one-dimensional equation by focusing on the following evaluations: a) A comparison of performance based on the MSE metric, b) A comparison of results using the $L_2$ relative error, c) A comparison of results based on the $L_\infty$ ($L_{max}$) relative error, and d) A detailed comparison with empirical experimental data.
\label{1d_inf_summary}}

To perform training for x-TE-QPINN, we must embed the classical data into a quantum state, as the quantum kernel requires a quantum state to execute the transformation process, which exploits specific quantum principles and phenomena. This embedding process has been shown to impact the quality of subsequent training \cite{berger2025trainable}. In this architecture, the embedding process will be optimized by a classical or quantum neural network. This neural network learns the parameter $\theta$ alongside the PQC layers, also known as hardware-efficient ansatz (HEA) layers, to predict the most suitable $\theta$ parameter for the embedding, thereby achieving the best predictive performance for the subsequent x-TE-QPINN architecture. Fig. \ref{emb} displays the heatmap result of the embedding function for the input equation for activator A. This embedding function's accuracy is improved by multiplying $\alpha_{i}(\tilde{x}, \tilde{t})$ with $\tilde{t}$ or $\tilde{x}$.

\subsection{Training Performance Analysis}

In this section, we provide a detailed performance analysis of the xTE-QPINN architecture by comparing it against the classical PINN baseline throughout the training process.

\subsubsection{One-Dimensional Reaction-Diffusion Equation}

\Figure[t!](topskip=0pt, botskip=0pt, midskip=0pt)[width=2.0\columnwidth]{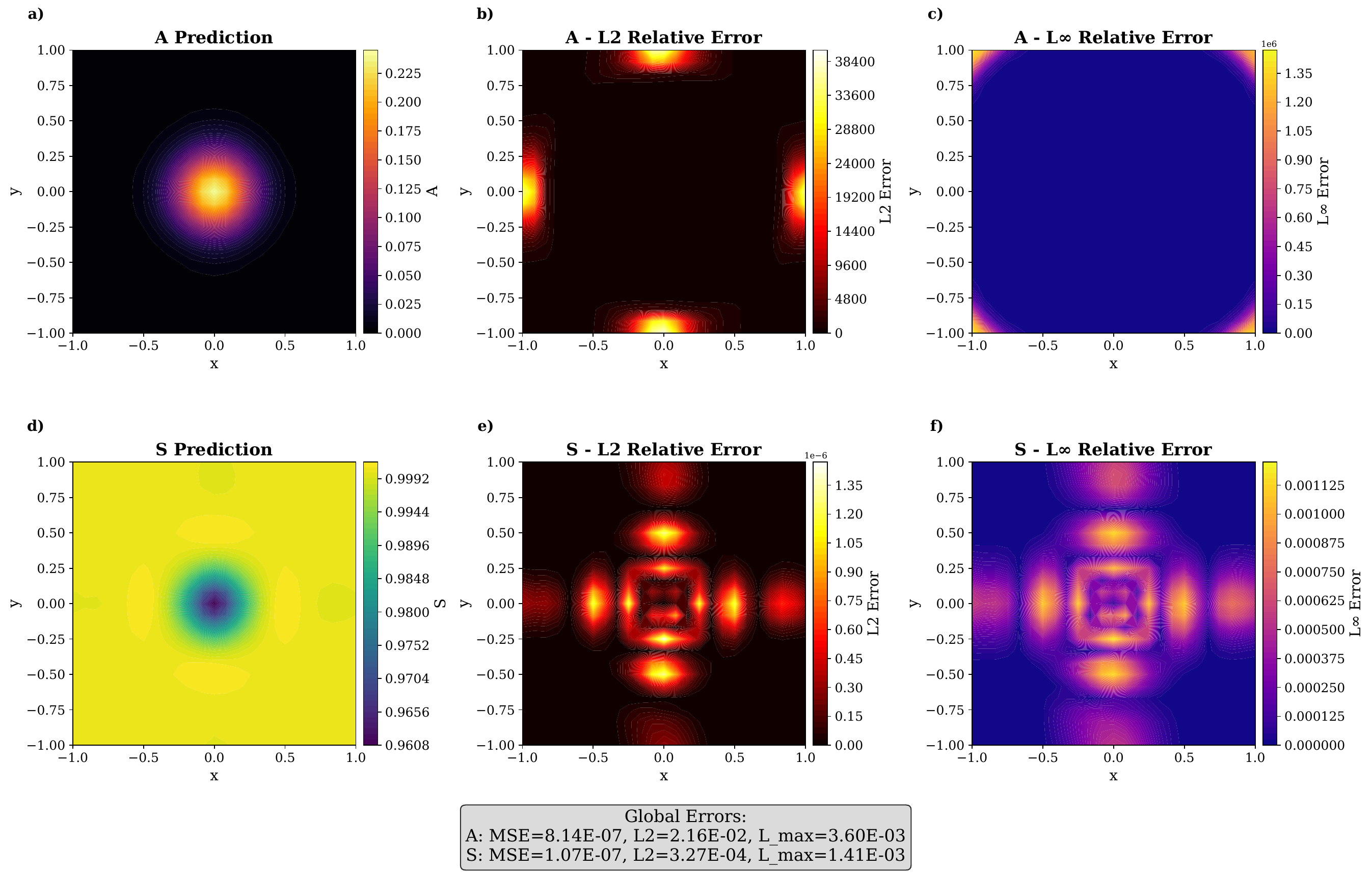}
{Inference results of the two-dimensional RD equation model after training 250 epochs: a) \& d) Prediction of A and S obtained using FNN-TE-QPINN; b) \& e) $L_2$ relative error of A and S, comparing FNN-TE-QPINN predictions with the RK45 reference solution; c) and f) $L_{\infty}$ relative error for A and S, comparing FNN-TE-QPINN predictions with the RK45 reference solution.
\label{rd_2d__fnn_basis_performance}}

A one-dimensional RD system represents the simplest case. We conducted experiments using three models PINN, FNN-TE-QPINN, and QNN-TE-QPINN to solve the PDE of a one-dimensional reaction–diffusion system, evaluating the model’s performance. Parameters used in the simulations were $D_{A} = 1 \times 10^{-5} cm^{2}s^{-1}$ , $D_{S} = 2 \times 10^{-3} cm^{2}s^{-1}$, $k_{1} = 1 m^{-2}s^{-1}$, $k_{2} = 1 s^{-1}$, and $k_{3} = 1 \times 10^{-3} m s^{-1}$. We experimented by gradually increasing the number of qubits and PQC layers in the experiment. We observed that the number of qubits has a significant impact on the GPU's CUDA memory (VRAM) when training QPINN models. Given the current configuration of the REPACSS supercomputer, with a VRAM capacity of 94 GB per GPU, the maximum number of qubits that can be trained for the one-dimensional RD equation is eight. This limitation highlights a significant challenge for classical quantum simulators when attempting to execute quantum algorithm experiments in the current era.

\begin{table}[htbp]
\centering
\caption{Compare the total loss of two models: FNN-TE-QPINN, and QNN-TE-QPINN, when training with a varying number of qubits and a fixed number of epochs at 100 for the one-dimensional RD equation.}
\setlength{\tabcolsep}{3pt}
\begin{tabular}{|p{60pt}|p{70pt}|p{70pt}|}
\hline
\textbf{No of Qubits} & \textbf{FNN-TE-QPINN} & \textbf{QNN-TE-QPINN} \\
\hline
 2 & 7.25E-04  & 1.42E-02 \\
 \hline
 4 & 1.63E-05  & 5.22E-03 \\
 \hline
 6 &  1.41E-05  & 5.09E-03 \\
 \hline
 8 & 5.92E-05  & 2.37E-03 \\
 \hline
\end{tabular}
\label{tab:Perf-1D-qubit}
\end{table}

Table \ref{tab:Perf-1D-qubit} compares the total loss during the training process of the two models, FNN-TE-QPINN and QNN-TE-QPINN, as the number of qubits increases progressively from two to eight. The results show that FNN-TE-QPINN achieves its best quality when trained with six qubits, and the quality decreases as the number of qubits increases to eight. Conversely, QNN-TE-QPINN continues to show a decreasing trend as the number of qubits increases to eight, although the total loss of QNN-TE-QPINN remains higher than that of FNN-TE-QPINN. The higher loss observed in QNN-TE-QPINN is predicted to be due to the fact that both the embedding and learning components are entirely built upon a quantum circuit.

\begin{table}[htbp]
\centering
\caption{Compare the total loss of two models: FNN-TE-QPINN and QNN-TE-QPINN, when training with a varying number of PQC layers and a fixed number of qubits at eight and the number of epochs at 200 for the one-dimensional RD equation.}
\setlength{\tabcolsep}{3pt}
\begin{tabular}{|p{60pt}|p{70pt}|p{70pt}|}
\hline
\textbf{No of PQC layers} & \textbf{FNN-TE-QPINN} & \textbf{QNN-TE-QPINN} \\
\hline
 5  & 6.35E-06 & 2.37E-03 \\
 \hline
 10 & 1.86E-06 & 3.23E-03\\
 \hline
 15 &  9.67E-07  & 2.81E-03 \\
 \hline
 20 &  1.76E-06  & 5.47E-03 \\
 \hline
\end{tabular}
\label{tab:Perf-1D-HEA}
\end{table}

Table \ref{tab:Perf-1D-HEA} illustrates the change in the loss function between the two models, FNN-TE-QPINN and QNN-TE-QPINN, when trained with eight qubits and 200 epochs, while the number of PQC layers is progressively increased from 5 to 20. The results show that both models achieve their best quality at 15 layers and begin to see an increase in the loss function when the number of PQC layers is increased to 20. The experimental results indicate that increasing the number of PQC layers up to an optimal level improves model performance; however, increasing it excessively leads to an increase in circuit depth, which in turn reduces the learning quality.

\begin{table}[htbp]
\centering
\caption{Compare the total loss of three models: PINN, FNN-TE-QPINN, and QNN-TE-QPINN, when training with a varying number of epochs and a fixed number of qubits at eight and the number of PQC layers at five for the one-dimensional RD equation.}
\setlength{\tabcolsep}{3pt}
\begin{tabular}{|p{50pt}|p{30pt}|p{60pt}|p{60pt}|}
\hline
\textbf{No of Epochs} & \textbf{PINN} & \textbf{FNN-TE-QPINN} & \textbf{QNN-TE-QPINN} \\
\hline
 100 & 5.88E-03  & 1.23E-05 & 2.37E-03 \\
 \hline
 200 & 1.55E-03  & 6.35E-06 & 2.37E-03\\
 \hline
 300 & 9.94E-03  & 6.35E-06 & 2.37E-03\\
 \hline
\end{tabular}
\label{tab:Perf-1D-epoch}
\end{table}

Table \ref{tab:Perf-1D-epoch} presents the loss function results for the three models: PINN, FNN-TE-QPINN, and QNN-TE-QPINN, when trained with a progressively increasing number of epochs, ranging from 100 to 300 epochs. The results show that QNN-TE-QPINN achieves convergence when trained with fewer than 100 epochs—specifically, it reaches its optimal loss function value starting from approximately 80 epochs. In contrast, FNN-TE-QPINN requires nearly 200 epochs to achieve this. The PINN model performs slightly worse, requiring close to 300 epochs to converge. When the number of epochs is increased beyond 300, the loss function values for all three models remain constant, indicating that the learning capacity has reached saturation.

To evaluate the quality of the solution produced by PINN, FNN-TE-QPINN, and QNN-TE-QPINN, we further examine the model’s training behavior. Fig. \ref{train_perf_three} illustrates the performance of three models after 300 training epochs with four qubits and 10 PQC layers. The experimental results show that FNN-TE-QPINN yields the best solution quality after training. The line chart in the far-left column represents the total loss for the three models: PINN, FNN-TE-QPINN, and QNN-TE-QPINN. It can be easily observed that FNN-TE-QPINN yields the best training quality, converging after approximately 150 epochs with a total loss reaching 2.86E-06. The progression line of the total loss also exhibits a smooth curve with virtually no sign of oscillation (or 'sawtooth' pattern). This soft line also indicates that the model does not show signs of overfitting. Furthermore, the MSE  for the activator A species decreases to 9.65E-08, which is a minimal value compared to the typical good performance range of other QPINN models, which is around 1E-04. Although it did not achieve as small a total loss as FNN-TE-QPINN, QNN-TE-QPINN has a positive point in that the model achieves convergence very quickly after only about 50 epochs of training. This quick progress demonstrates the superiority of quantum neural network architectures when learning data. Without needing excessive data or prolonged training, quantum neural networks can achieve their maximum generalization capability.

Fig. \ref{TE-QPINN-sol} presents the results obtained after training with 300 epochs of FNN-TE-QPINN model. The solution produced by FNN-TE-QPINN is compared against the reference solution computed using the classical RK45 method. The results show that the FNN-TE-QPINN solution closely matches the classical solution, demonstrating that the model effectively learned the underlying PDE of the one-dimensional RD system. In detail, the far-left column represents the loss function and MSE evolution of the activator A and substrate S species during the training process. The chart shows that the total loss is very smooth and converges between approximately 125 and 150 epochs. The two loss functions for activator A and substrate S show minor sawtooth patterns, but overall, both curves demonstrate a steady and smooth decrease. The two MSE curves for activator A and substrate S exhibit a gradual decrease, reaching saturation at approximately 100 epochs for A and 120 epochs for S. Neither curve shows any increase, confirming that the model does not overfit and achieves good generalization. This result will be explained in more detail in the inference model section after training. The center column contains a heatmap illustrating the prediction results of the FNN-TE-QPINN model during the training process, with the space coordinate $x \in (-1,1)$ and time $t$ ranging from $0$ to $1$. The far-right column shows the absolute error measured between the model's prediction result (taken from the center column) and the reference solution (solved using the RK45 method). The absolute error results indicate that the FNN-TE-QPINN solution matches very well with the solution obtained by the precise method.

\subsubsection{Two-Dimensional Reaction-Diffusion Equation}

In this section, we present the experimental results of the models applied to the two-dimensional RD equation, unlike the one-dimensional case, which involves only one spatial coordinate $x$—where each point is defined by $x$ (such as in a metal rod, pipe, or wire)—each collocation point in two-dimensional space consists of two spatial coordinates $(x, y)$. Here, each point represents a position on a plane $u = u(x, y, t)$, such as animal skin, chemical surfaces, or biological membranes. The Laplacian operator in 2D causes diffusion to occur in two independent directions. This independent two-way diffusion makes the dynamics significantly more complex, resulting in a broader range of patterns generated by the reaction process. Furthermore, the gradients are larger, making the two-dimensional RD equation stiffer and more challenging to solve. In the experimental section for the two-dimensional RD equation, we utilized the same equation parameters as those used in the one-dimensional case.

\begin{table}[htbp]
\centering
\caption{Compare the total loss of three models: PINN, FNN-TE-QPINN, and QNN-TE-QPINN, when training with a varying number of epochs and a fixed number of qubits at six and the number of PQC layers at five for the two-dimensional RD equation.}
\setlength{\tabcolsep}{3pt}
\begin{tabular}{|p{50pt}|p{30pt}|p{60pt}|p{60pt}|}
\hline
\textbf{No of Epochs} & \textbf{PINN} & \textbf{FNN-TE-QPINN} & \textbf{QNN-TE-QPINN} \\
\hline
 100 & 9.86E-04  & 2.19E-05 & 7.66E-04 \\
 \hline
 200 & 5.20E-04  & 1.75E-05 & 7.66E-04\\
 \hline
 300 & 3.36E-04  & 1.99E-05 & 7.66E-04\\
 \hline
 400 & 2.67E-04  & 1.69E-05 & 7.66E-04\\
 \hline
 500 & 2.17E-04  & 1.49E-05 & 7.66E-04\\
 \hline
\end{tabular}
\label{tab:Perf-2D-epoch}
\end{table}

Regarding the number of epochs required for model convergence, similar to the one-dimensional results, QNN-TE-QPINN reaches saturation after nearly 100 epochs of training. In contrast, FNN-TE-QPINN requires up to 300 epochs to achieve its best learning quality. PINN is the model that requires the most extensive training process, as its loss function continues to decrease when the number of epochs is increased from 100 to 500. Detailed results are presented in Table \ref{tab:Perf-2D-epoch}.

\begin{table}[htbp]
\centering
\caption{Compare the total loss of two models: FNN-TE-QPINN, and QNN-TE-QPINN, when training with a varying number of qubits and a fixed number of epochs at 100 for the two-dimensional RD equation.}
\setlength{\tabcolsep}{3pt}
\begin{tabular}{|p{60pt}|p{70pt}|p{70pt}|}
\hline
\textbf{No of Qubits} & \textbf{FNN-TE-QPINN} & \textbf{QNN-TE-QPINN} \\
\hline
 2 & 8.25E-04  & 1.31E-03 \\
 \hline
 4 & 4.02E-05  & 6.13E-04 \\
 \hline
 6 & 2.02E-05  & 7.66E-04 \\
 \hline
\end{tabular}
\label{tab:Perf-2D-qubit}
\end{table}

For the two-dimensional RD equation, the maximum number of qubits that can be simulated on a classical computer with the REPACSS configuration is six qubits. Table \ref{tab:Perf-2D-qubit} presents the comparison of the total loss between the FNN-TE-QPINN and QNN-TE-QPINN versions when experimented with six qubits, five PQC layers, and trained for 100 epochs. The results indicate that FNN-TE-QPINN performs exceptionally well, with the loss function continuing to decrease as the number of qubits increases. In contrast, QNN-TE-QPINN achieves its best results using four qubits, and the model shows signs of degraded learning quality when the number of qubits is increased to six.

\begin{table}[htbp]
\centering
\caption{Compare the total loss of two models: FNN-TE-QPINN and QNN-TE-QPINN, when training with a varying number of PQC layers and a fixed number of qubits at six and the number of epochs at 200 for the two-dimensional RD equation.}
\setlength{\tabcolsep}{3pt}
\begin{tabular}{|p{60pt}|p{70pt}|p{70pt}|}
\hline
\textbf{No of PQC layers} & \textbf{FNN-TE-QPINN} & \textbf{QNN-TE-QPINN} \\
\hline
 5  & 1.75E-05 &  7.66E-04 \\
 \hline
 10 & 6.00E-06 &  2.65E-04\\
 \hline
 15 &  4.08E-06  &  5.82E-04 \\
 \hline
\end{tabular}
\label{tab:Perf-2D-HEA}
\end{table}

Furthermore, the influence of PQC layers on model performance during training varies between FNN-TE-QPINN and QNN-TE-QPINN. The results from Table \ref{tab:Perf-2D-HEA} show that with the current experimental configuration on REPACSS, the maximum number of PQC layers is 15 for the 2D case, compared to 20 for the one-dimensional case. FNN-TE-QPINN demonstrates improved learning capability as the number of PQC layers increases from 5 to 15. Conversely, QNN-TE-QPINN achieves its optimal performance with 10 PQC layers; beyond this point, increasing the number of layers leads to a decline in learning quality. This point can be explained by the fact that the circuit depth of QNN-TE-QPINN increases more rapidly compared to FNN-TE-QPINN, which utilizes a classical neural network component.

Comparing the training results of the three models—PINN, FNN-TE-QPINN, and QNN-TE-QPINN—we observe that FNN-TE-QPINN achieves the superior learning performance (Fig. \ref{rd_2d_model_comparison}). Its total loss, as well as the individual loss functions for A and S, decrease significantly compared to the other two models. Specifically, the total loss for FNN-TE-QPINN reaches 1.31E-05, whereas PINN and QNN-TE-QPINN yield values of 1.56E-04 and 3.32E-04, respectively. Although QNN-TE-QPINN does not yield the lowest loss, its loss curve remains remarkably smooth and exhibits rapid convergence. In contrast, the PINN loss curve is characterized by a 'sawtooth' pattern; notably, the loss for S shows harmonic oscillations during the first 50 epochs. These findings are highly consistent with the results obtained in the one-dimensional case.

Fig. \ref{rd_2d_training_fnn_basis} illustrates the performance of QNN-TE-QPINN trained for 250 epochs, utilizing four qubits for embedding and ten PQC layers for the learning circuit. The results in subfigure (a) show that the total loss function of QNN-TE-QPINN decreases smoothly from over 2.17E-01 at epoch 0 to 3.32E-04 at epoch 249. The individual loss functions for A and S exhibit no "sawtooth" patterns and demonstrate clear signs of convergence from epoch 100 onwards. In subfigure (d), the MSE for A and S remains flat without any upward trend, indicating that the model does not suffer from overfitting. The middle column, comprising subfigures (b) and (e), displays the model's predicted outputs during the validation process. These predictions align closely with the reference solutions obtained via the classical RK45 method, as evidenced by the absolute error shown in subfigures (c) and (f) in the far-right column, measured at time $t = 0.5$.

\subsection{Inferences}
Lastly, the trained model was tested to verify its performance using the PennyLane "default.qubit" simulator. After training the model with suitable epochs and saving the best results, we performed inference on the saved model. For the one-dimensional RD equation,The results for QNN-TE-QPINN are illustrated in Fig. \ref{1d_infer_perf}, where the leftmost column presents the solutions for the RD equations (A and S) as solved by the model. The middle column displays heatmaps of solution accuracy, using the $L_{2}$ relative error to measure distance from the exact solutions obtained via the classical RK45 method. The rightmost column shows the heatmaps for the $L_\infty$ ($L_{max}$) relative error, measuring the maximum relative error produced by the model, along with detailed results measured using three metrics: MSE, $L_{2}$, and $L_{max}$ for A and S. The results indicate that the MSE for both A and S is extremely small, suggesting that, overall, the model has converged very closely to the reference solution. This proximity is clearly visible in the solution heatmaps. An $L_{2}$ relative error of 6.84\% is an acceptable result for a purely quantum circuit used for embedding. The maximum error $L_{max}$ accounts for only 2.5\%, demonstrating significant stability. Furthermore, the narrow gap between the MSE and $L_{max}$ values proves that the model does not suffer from severe local bias or significant localized inaccuracies.

To provide a comprehensive overview of the three models, we aggregate the one-dimensional inference results in Fig. \ref{1d_inf_summary}. We measured three metrics—MSE, $L_2$, and $L_{max}$—for each model and visualized them using bar charts. FNN-TE-QPINN delivers superior performance compared to the other two models, with an MSE for S of 3.8424E-09, an $L_2$ error of 6.2286E-05, and an $L_{max}$ reduced to 2.4310E-04. However, a drawback of this model is the significant discrepancy in errors between A and S across all metrics when compared to the other two architectures.

Fig. \ref{rd_2d__fnn_basis_performance} displays the inference results of the FNN-TE-QPINN model for the two-dimensional RD equation. Subfigures (a) and (d) present the solutions for A and S as predicted by the model. We employed the MSE, $L_2$, and $L_{max}$ relative error to compare these solutions with the RK45 reference solution. In the left, the heatmap results indicate a high degree of alignment, with an MSE of 8.14E-07 for A and 1.07E-07 for S. Subfigure (b) and (e) in the center show the $L_2$ gained 2.16E-02 for A and 3.27E-04 for S.  Furthermore, in the far-right column, we used the $L_{max}$ relative error metric to evaluate the quality of the FNN-TE-QPINN solutions, as illustrated in subfigures (c) and (f). The results show that the $L_{max}$ reaches 3.60E-03 for A and 1.41E-03 for S. These infinitesimal values, ranging from 0.1\% to 0.3\%, demonstrate the exceptionally high accuracy of the FNN-TE-QPINN model, which synergistically combines the power of quantum circuits with classical neural networks for embedding.

\section{Conclusions}

In this study, we have presented the Extended Trainable-Embedding Quantum Physics-Informed Neural Networks method for solving multi-species reaction–diffusion PDEs. We experimentally demonstrated the effectiveness of x-TE-QPINN in solving these dynamics. We also introduced a purely quantum version, QNN-TE-QPINN, which utilizes a quantum neural network for the embedding component, rather than traditional classical neural networks. Our results show that both models perform exceptionally well in solving RD equations, significantly outperforming the conventional classical PINN approach.

The experimental results described in the Results section demonstrate that x-TE-QPINN exhibits superior performance and learning capabilities compared to traditional PINNs when solving the RD equations in both one-dimensional and two-dimensional cases. These quantum models show a distinct advantage in learning physical laws and data patterns compared to their classical counterparts. Specifically, the quantum-enhanced models require significantly fewer training epochs to achieve convergence and yield substantially lower loss function values compared to classical methods. Furthermore, the issue of data embedding—a primary bottleneck hindering the efficiency of hybrid classical-quantum models—is effectively addressed through the integration of neural networks. This solution produces promising results for the future of quantum machine learning models.

Although QNN-TE-QPINN did not perform as well as FNN-TE-QPINN in our experiments, this can be attributed to its fully quantum approach for both embedding and learning. The increased quantum circuit depth is one of the primary factors preventing the loss function from reaching the same level of optimization as classical neural networks. Nevertheless, the significance of this research lies in opening a pathway toward implementing a fully quantum architecture for QPINNs, moving beyond hybrid classical-quantum structures. In the near future, as quantum computing becomes more accessible, designing model architectures that handle both learning circuits and parameter optimization automatically on quantum hardware will be a promising and appropriate direction for future research.

The inherent differences between classical data and quantum states, combined with the measurement process that causes the collapse of the quantum states, often diminish the potential advantages offered by the unique properties of quantum circuits. This mentioned problem represents one of the primary challenges in implementing hybrid classical-quantum models during the noisy intermediate-scale quantum era. This study presents a promising research direction for addressing the challenges at the input-output interface between the classical and quantum domains. Moving forward, we will continue to explore the capabilities of x-TE-QPINN in solving other nonlinear and highly stiff PDEs, which represent a primary strength of these quantum deep learning models. Furthermore, extending the time horizon of the input collocation points leads to the decay problem, also known as gradient pathology \cite{wang2021understanding}, which degrades the accuracy of PINN models—a challenge that x-TE-QPINN also encounters. We intend to investigate this issue in depth to understand its underlying mechanisms and propose optimized solutions to mitigate the decay.

\section*{Appendix}
The source code used in this work can be explored at the accompanying \href{hhttps://codeocean.com/capsule/6966810/}{CodeOcean} capsule.

\bibliographystyle{IEEEtran}
\bibliography{TQE}

\EOD
\end{document}